\documentclass[preprint]{elsarticle}   
\usepackage{amsmath,amssymb}          
\usepackage{amsfonts,amsthm}
\usepackage{graphicx}
\usepackage{float}
\usepackage{booktabs}
\usepackage{tikz}
\usepackage{xcolor}
\usepackage[expansion=false]{microtype}   
\IfFileExists{orcidlink.sty}{\usepackage{orcidlink}}{\providecommand{\orcidlink}[1]{}}
\usepackage{caption}   
\usepackage{hyperref}
\graphicspath{{figures/}}

\newcommand{\figscale}{1.0}  
\newcommand{\tabscale}{1.0}  
\newtheorem{theorem}{Theorem}

\begin{document}

\begin{frontmatter}

\title{Differential-Embedding Reconstruction of Dynamical Systems from Scalar Time Series}

\author[ntu]{Ameir Shaa\corref{cor1}\,\orcidlink{0000-0001-6839-3195}}
\ead{ameirshaa.akberali@ntu.edu.sg}
\cortext[cor1]{Corresponding author.}
\author[ntu]{Claude Guet}
\affiliation[ntu]{organization={School of Physical and Mathematical Sciences,
    Nanyang Technological University},
  addressline={21 Nanyang Link},
  postcode={637371},
  country={Singapore}}

\begin{abstract}

We study the reconstruction of an unknown dynamical system from a single noisy scalar time series. 
The goal is to recover the underlying dynamics for forecasting. 
We introduce a method that uses differential embedding coordinates to identify a rational closure of 
the embedding dynamics directly from data. 
The closure is identified through a weak-form regression pipeline, which avoids unstable pointwise
differentiation of noisy data. When applied to noise-free Lorenz and R\"ossler systems, the method recovers closures that support long forecasts across a broad ensemble of realizations ($18.1$ and $7.1$ Lyapunov times respectively). Under $15$--$30\%$ additive Gaussian noise, performance becomes system-dependent.
For the Lorenz system, forecast horizons remain short even in the best cases, whereas the R\"ossler system generally
performs better in absolute terms, though not once normalized by the Lyapunov time. Our proposed method recovers directly interpretable closure
coefficients which we compared against the known analytic closures of the Lorenz and R\"ossler systems.

\end{abstract}

\begin{keyword}
differential embedding \sep system identification \sep weak-form regression \sep
chaotic time series \sep observability \sep forecasting
\end{keyword}

\end{frontmatter}

\section{Introduction}

Many physical systems are observed through a single measured signal rather than through their full state. 
Recovering a predictive model from such data is a long-standing problem in nonlinear dynamics.

State-space reconstruction from a scalar observable has a long history. Classical
embedding theory~\cite{whitney1936differentiable} shows that smooth manifolds of
dimension~$m$ admit smooth embeddings in Euclidean space of dimension at most
$2m+1$. Packard
et al.~\cite{packard1980geometry} showed empirically that a single measured
coordinate can reconstruct the underlying dynamics, using both delay and
derivative coordinates and Takens' embedding theorem~\cite{takens1981detecting}
later provided the generic justification. Sauer et al.~\cite{sauer1991embedology} subsequently generalized beyond smooth
manifolds to attractors of finite box-counting dimension.
In practice, reconstruction and forecasting quality depends strongly on the delay, the embedding
dimension, the noise level, and the measurement function~\cite{fraser1986independent,kennel1992determining,casdagli1991state,farmer1987predicting,sugihara1990nonlinear}.
Differential embeddings belong to this same tradition. Letellier, Aguirre, and Maquet~\cite{letellier2005relation,aguirre2018structural} 
made explicit that reconstruction quality depends on the chosen observable and can
deteriorate sharply near low-observability regions.  Lainscsek and Gonzalez \cite{lainscsek2015delay,gonzalez2020assessing} provided a route through delay differential
analysis in which scalar time series are fit directly by low-order delay-differential models and
used to assess observability from data.

More recent work pursues learned coordinates and learned dynamics from partial
observations. Autoencoder-based methods reconstruct latent attractors from noisy
scalar series or jointly discover latent coordinates and sparse governing
equations~\cite{jiang2017state,gilpin2020deep,champion2019data}, and have
recently been extended to the single-scalar setting via deep delay
autoencoders~\cite{bakarji2023discovering}. Delay-coordinate
models such as HAVOK yield a linear-plus-forcing representation in delay
space~\cite{brunton2017havok}, while reservoir-computing and deep-learning
approaches can replicate attractor statistics and forecast partially observed
chaotic systems without explicit closed-form
equations~\cite{pathak2017replicate,gupta2022partially,wang2021reconstructing}.
In parallel, sparse-regression methods such as SINDy and its rational, implicit,
and weak-form variants identify compact and interpretable models directly from
data, including systems with rational nonlinearities and noisy
measurements~\cite{brunton2016discovering,mangan2016inferring,messenger2021weak,kaheman2020sindy}.
The present work is most similar to this last line of research but differs in what it recovers. We
identify an explicit rational closure in differential-embedding coordinates from a single scalar
observable, estimating its denominator and numerator in two stages and evaluating the result by
forecast skill. Because the recovered coefficients are interpretable, we compare them directly
with the analytic closures of the test systems, separating forecast success from coefficient accuracy.

We formulate the reconstruction problem in differential embedding coordinates, using successive time 
derivatives of the scalar signal as coordinates. This is the differential analog of the more familiar 
delay embedding~\cite{sauer1991embedology}. In these coordinates, we seek a closed-form description of the 
dynamics through a rational closure, in which the highest derivative is represented as a ratio of polynomial functions. 
Rational models of this kind have been identified from full-state measurements by rational 
SINDy~\cite{mangan2016inferring}. Here we extend that perspective to the scalar-observable setting, where the 
denominator is not known in advance and must be recovered along with the numerator.

The paper has three main contributions. First, we introduce a method for recovering a rational closure 
directly from noisy scalar data in differential embedding coordinates, without prior knowledge of the governing 
equations or access to full-state observations. Second, the recovered model is explicit and interpretable, 
which allows coefficient-level validation on test systems whose analytic closures are known, rather than 
relying on forecast scores alone. Third, we provide a comparative study on Lorenz and R\"ossler systems under clean and 
noisy observations, documenting how forecast quality and coefficient recovery vary across systems and noise levels.

Section~\ref{sec:theory} develops the closure in differential embedding coordinates. Section~\ref{sec:method} 
describes the reconstruction methodology. Section~\ref{sec:results} reports results on 
Lorenz~\cite{lorenz1963deterministic} and R\"ossler~\cite{rossler1976equation} systems under clean and noisy observations. 
Section~\ref{sec:discussion} discusses closure-direction diagnostics,
observability, comparison with prior methods, and limitations.

\section{Closure in Differential Embedding Coordinates}
\label{sec:theory}
Consider a uniformly sampled scalar time series $\{s_i\}_{i=1}^{N}$, where
$s_i = u(t_i)$, $t_i \in [0,T]$, and the sampling interval is
$\Delta t = t_{i+1}-t_i$.
Here, $u:[0,T]\to\mathbb{R}$ is assumed to be $k$-times differentiable.
Let $s_{\Delta t}(t)$ denote an interpolation of the sampled data.
In the continuum limit $\Delta t \to 0$, we assume that
$s_{\Delta t}(t) \to u(t)$.
The interpolation procedure is described in greater detail in
Section~\ref{sec:method}.

We further assume that $u(t)$ is a scalar observable of an unknown smooth
autonomous dynamical system,
\begin{equation}
\dot{\mathbf{U}} = \mathbf{F}(\mathbf{U}),
\qquad
\mathbf{U} \in \mathbb{R}^n,
\label{eq:system}
\end{equation}
where $\mathbf{F}:\mathbb{R}^n\to\mathbb{R}^n$ is smooth but otherwise unknown.
The system is assumed to possess a compact invariant attractor
$\mathcal{A}\subset\mathbb{R}^n$, and the trajectory of interest satisfies
$\mathbf{U}(t)\in\mathcal{A}$ for all $t$ under consideration.
The observed signal is given by
\begin{equation}
u(t) = \pi\bigl(\mathbf{U}(t)\bigr),
\qquad
\pi : \mathbb{R}^n \to \mathbb{R},
\label{eq:scalar_obs}
\end{equation}
where $\pi$ is a smooth observation function.
Hence, although the dynamical system is defined on the ambient state space
$\mathbb{R}^n$, the dynamics considered here are those restricted to
$\mathcal{A}$.

The successive time derivatives of $u(t)$ define differential embedding coordinates
$u_j(t)$, $j = 0,\ldots,k$, collected into the differential-embedding map
\begin{equation}
\pi_J : \mathbb{R}^n \to \mathbb{R}^{k+1}, \quad
\pi_J(\mathbf{U}) = \bigl(u,\,\dot{u},\,\ldots,\,u^{(k)}\bigr).
\label{eq:jet_map}
\end{equation}
Let $\mathcal{A}_J=\pi_J(\mathcal{A})\subset\mathbb{R}^{k+1}$ denote the embedded
attractor in differential embedding coordinates.
Wherever $\pi_J$ is locally invertible, $\mathcal{A}$ and $\mathcal{A}_J$ are
locally diffeomorphic, and differentiating $\mathbf{u}=\pi_J(\mathbf{U})$ along
trajectories gives
\begin{equation}
\dot{\mathbf{u}} = J_{\pi_J(\mathbf{U})}\,\mathbf{F}(\mathbf{U}).
\label{eq:jet_chain}
\end{equation}
where $J_{\pi_J(\mathbf{U})}$ denotes the Jacobian matrix of the differential-embedding map $\pi_J$ at $\mathbf{U}$.
As proved in \ref{app:jet_proof},  there exists a smooth closure function $\Phi$ such that
\begin{equation}
\dot{u}_i = u_{i+1}, \quad i = 0,\ldots,k-1, \qquad
\dot{u}_k = \Phi(u_0,\ldots,u_k).
\label{eq:jet_ode}
\end{equation}

Therefore, there exists a system of differential equations on
$\mathcal{A}_J$ that is smoothly conjugate to the dynamical system
defined by~\eqref{eq:system}:
\begin{equation}
\frac{d}{dt}
\begin{pmatrix}
u_0 \\ u_1 \\ \vdots \\ u_k
\end{pmatrix}
=
\begin{pmatrix}
u_1 \\ u_2 \\ \vdots \\ u_{k+1} = \Phi(u_0,\ldots,u_k)
\end{pmatrix}.
\label{eq:systemdiffeo}
\end{equation}

As we prove in \ref{app:sw}, the closure, $\Phi$, is rational by construction.  We can write $\Phi(\mathbf u)=N(\mathbf u)/D(\mathbf u)$ where $N(\mathbf u)$ and $D(\mathbf u)$ are represented as polynomials by virtue of the Stone-Weierstrass theorem~\cite{rudin1976principles} which states that for a compact and closed set, any continous function is arbitrarily well approximated by polynomials. This rational structure of closures in differential
embedding coordinates is well documented~\cite{letellier2005relation,lainscsek2015delay}.

Where the denominator $D(\mathbf u)$ vanishes (i.e. $D(\mathbf u) = 0$) is where the map $\pi_J$ loses invertibility and observability~\cite{letellier2005relation,aguirre2018structural}. In this paper, we term this locus the singular set (or pole) of the closure.

Our differential system then becomes

\begin{equation}
\frac{d}{dt}
\begin{pmatrix}
u_0 \\ u_1 \\ \vdots \\ u_k
\end{pmatrix}
=
\begin{pmatrix}
u_1 \\ u_2 \\ \vdots \\ u_{k+1}=\dfrac{N(u_0,\ldots,u_k)}{D(u_0,\ldots,u_k)}
\end{pmatrix}.
\label{eq:systemdiffeorational}
\end{equation}

The next section discusses in detail the methodology for determining the dimension of the system (ie. k), the approximation to making $s_i$ a smooth continous function of time and the polynomial approximation to N and D.

\section{Methodology}
\label{sec:method}

We seek to reconstruct a forecastable rational closure from a single noisy
scalar record. The methodology proceeds through a fixed sequence of stages:
spline-based differential embedding, weak-form regression, closure
identification, and forecast-based validation. We start from a
uniformly sampled scalar record $\{y_i\}_{i=1}^{N_{\mathrm{s}}}$ of $N_{\mathrm{s}}$ samples, where
$y_i = s(t_i) + \eta_i$ is a noisy observation of an underlying signal $s(t)$ at
sample time $t_i$ for $i=1,\ldots,N_{\mathrm{s}}$, with uniform spacing
$\Delta t = t_{i+1}-t_i$ and independent Gaussian observation noise
$\eta_i \stackrel{\mathrm{i.i.d.}}{\sim} \mathcal{N}(0,\sigma_\eta^2)$.

Differential embedding coordinates are then
estimated with a Reinsch spline \cite{reinsch1967smoothing}. From these coordinates,
weak-form regression matrices are assembled, the rational closure is identified, and the recovered
closure is forecast forward in time. Valid prediction time (VPT) is then used
to score forecast quality. Figure~\ref{fig:pipeline} summarizes this workflow.

The embedding dimension is assessed independently with the false nearest
neighbours (FNN) criterion~\cite{kennel1992determining}, using a delay $\tau$
chosen at the first local minimum of the average mutual
information~\cite{fraser1986independent}; this provides a data-driven estimate of
a sufficient embedding dimension, so the true system dimension need not be assumed
as prior information. We note that FNN and delay-coordinate diagnostics serve only as a check that three
coordinates suffice. For clarity and without loss of generality, we present the methodology in
a three-dimensional setting (i.e., $\mathbf{u} = (u_0,u_1,u_2) \in \mathbb{R}^3$);
a higher embedding dimension only enlarges the coordinate vector and the
regression matrices, leaving the construction unchanged.

A quintic Reinsch spline~\cite{reinsch1967smoothing} fits $u_0$
to the scalar record by minimising
\begin{equation}
  \sum_{i=1}^{N_{\mathrm{s}}} \bigl(y_i - u_0(t_i)\bigr)^2
  + \lambda \int \bigl(u_0^{(3)}\bigr)^2\,\mathrm{d}t,
\label{eq:reinsch}
\end{equation}
where the sum runs over the $N_{\mathrm{s}}$ samples, the integral is taken
over the sampling window $[t_1,\,t_{N_{\mathrm{s}}}]$, and the smoothing parameter $\lambda$
controls the interpolation--smoothing tradeoff.
The derivative coordinates $u_1$ and $u_2$ are then obtained by
analytically differentiating the fitted quintic spline.
Motivated by Reinsch's residual-target criterion~\cite{reinsch1967smoothing}, 
we set the penalty weight $\lambda = N_{\mathrm{s}}\hat\sigma^2$,
where the noise scale $\hat\sigma$ is estimated from the data as described below.

In the clean data limit, $\hat\sigma \to 0$ so $\lambda \to 0$ and the spline
interpolates the observed signal. In the noisy regime, $\hat\sigma$
is estimated from the second differences
$\Delta^2 y_i = y_{i+1} - 2y_i + y_{i-1}$, which suppress the smooth signal,
leaving behind a noise-dominated sequence. Following Donoho and
Johnstone~\cite{donoho1994ideal}, we use the median absolute deviation (MAD) of
these second differences,
$\mathrm{MAD}(z) = \mathrm{median}_i\,|z_i - \mathrm{median}_j\,z_j|$, as an
outlier-resistant scale,
\begin{equation}
  \hat\sigma = \frac{1.4826\,\mathrm{MAD}(\Delta^2 y)}{\sqrt{6}}.
\label{eq:noise_scale}
\end{equation}
The factor $1.4826$ converts the MAD into a standard deviation for Gaussian data,
and the $\sqrt{6}$ corrects for the noise being amplified by the second-difference
operator; together they turn the robust spread of $\Delta^2 y$ into an estimate of
the noise level $\sigma_\eta$ (\ref{app:noise_scale}). In this way the true
noise level need not be supplied in advance.

Given differential embedding coordinates $(u_0,u_1,u_2)$, we seek a rational
closure satisfying Eq.~\eqref{eq:closure}. The denominator and numerator are
represented in monomial libraries $\Phi_D$ and $\Phi_N$ of maximum total degree
$d_D$ and $d_N$,
\begin{equation}
\begin{split}
D(\mathbf{u}) &= \theta_D^\top \Phi_D(\mathbf{u})
= \!\!\sum_{|\beta|\le d_D}\!\! \theta_{D,\beta}\,
u_0^{\beta_0}u_1^{\beta_1}u_2^{\beta_2},\\
N(\mathbf{u}) &= \theta_N^\top \Phi_N(\mathbf{u})
= \!\!\sum_{|\alpha|\le d_N}\!\! \theta_{N,\alpha}\,
u_0^{\alpha_0}u_1^{\alpha_1}u_2^{\alpha_2},
\end{split}
\label{eq:libraries}
\end{equation}
where $\Phi_D$ and $\Phi_N$ span all monomials in $(u_0,u_1,u_2)$ up to total
degree $d_D$ and $d_N$ respectively and $\theta_{\gamma} \quad \gamma \in \{N,D\}$ are the coefficients. 
Note that both the denominator degree
$d_D$ and the numerator degree $d_N$ are free hyperparameters of the closure family.
The coefficients of $D(\mathbf{u})$ within a chosen degree-$d_D$ library are
recovered from the data and determine the singular set.

The closure can be identified either in strong form, using pointwise samples,
or in weak form, using windowed integral constraints. In strong form, the
closure identity
\begin{equation}
D(\mathbf{u})\,\dot{u}_2 = N(\mathbf{u})
\label{eq:strong_identity}
\end{equation}
is enforced pointwise along the trajectory.
Writing $u_3 \equiv \dot{u}_2 = s^{(3)}$ for the highest derivative
(obtained by differentiating the fitted spline once more), the resulting
design matrix is
\begin{equation}
  M = \bigl[\,u_3 \odot \Phi_D \;\; -\Phi_N\,\bigr],
\label{eq:strong_form}
\end{equation}
where $\Phi_D$ and $\Phi_N$ are the libraries for $D$ and $N$ evaluated
row-wise over the samples, and $\odot$ denotes the elementwise product of the
sample vector $u_3$ with each column of $\Phi_D$.

Stacking Eq.~\eqref{eq:strong_identity} over the samples gives a homogeneous
system $M\,\theta = 0$, with $\theta=[\theta_D^\top,\theta_N^\top]^\top$.
As derived in \ref{app:nullspace}, the
coefficients we seek are therefore a null vector of $M$: a solution is fixed only
up to an overall scale, so the closure is a single \emph{direction} in coefficient
space rather than an isolated point. With noisy data $M$ has no exact null vector,
so we take the direction it maps closest to zero---the right singular vector of its
smallest singular value~\cite{mangan2016inferring,kaheman2020sindy}, which
minimizes $\|M\theta\|_2$ over unit-norm $\theta$. Because the entire model is
carried by this one direction, its accuracy is limited by how well that direction
is resolved. The resolution is set by the gap between the smallest singular value
$\sigma_{\min}$ and the next, $\sigma_{\mathrm{next}}$: the estimate's sensitivity
to perturbations of $M$ scales as $1/(\sigma_{\mathrm{next}}-\sigma_{\min})$, so
when the two are close a small perturbation of $M$ rotates the recovered direction
far from the true one.

Noise enters $M$ most strongly through the $u_3\odot\Phi_D$ block. The lower-order
coordinates $u_0,u_1,u_2$ that populate both libraries are also spline estimates,
but they are comparatively smooth, whereas $u_3=s^{(3)}$ is a third derivative and
differentiation amplifies high-frequency content~\cite{chartrand2011numerical}.
On clean data the spline interpolates the signal, $u_3$ is accurate, and the
null vector is recovered reliably. Under noise the high-frequency error in $u_3$
perturbs this block, narrows the singular-value gap, and rotates the estimated
null vector away from the true coefficients. This motivates the weak
formulation.

The weak-form reformulation~\cite{schaeffer2017sparse,messenger2021weak}
replaces pointwise use of $u_3$ by integration against a smooth, compactly supported
test function $\psi$ over a window $[a,b]$,
\begin{equation}
\int_a^b D(\mathbf{u}(t))\,\dot{u}_2(t)\,\psi(t)\,\mathrm{d}t
=
\int_a^b N(\mathbf{u}(t))\,\psi(t)\,\mathrm{d}t.
\label{eq:weak_form}
\end{equation}
Integrating the left-hand side by parts moves the time derivative off $u_2$.
Since $\dot{D} = (\partial_{u_0}D)\,u_1 + (\partial_{u_1}D)\,u_2
+ (\partial_{u_2}D)\,u_3$, the highest derivative $u_3$ is eliminated
precisely when $D$ is independent of $u_2$. 
\begin{equation}
\begin{split}
\!-\!\!\int_a^b u_2 \Bigl(
&D\,\psi'
+ (\partial_{u_0}D)\,u_1\,\psi \\
&\qquad
+ (\partial_{u_1}D)\,u_2\,\psi
\Bigr)\,\mathrm{d}t
=
\int_a^b N(\mathbf{u})\,\psi\,\mathrm{d}t,
\end{split}
\label{eq:ibp_general}
\end{equation}
where $\psi$ is explicitly chosen such that $\psi(a)=\psi(b)=0$ and 
the boundary terms vanish by construction. The weak form therefore restricts
$\Phi_D$ to monomials in $(u_0,u_1)$, while $\Phi_N$ remains the full
degree-$d_N$ library in $(u_0,u_1,u_2)$. This restriction is a modeling choice,
adopted so that the regression avoids pointwise use of $u_3$. In general, it is not the case
that every true closure must have $D$ independent of $u_2$.

The weak-form window parameters are free hyperparameters of the closure family along with the aforementioned 
denominator degree $d_D$ and the numerator degree $d_N$.

Evaluating the weak form~\eqref{eq:ibp_general} against a collection of test
functions $\{\psi_k\}_{k=1}^K$, each supported on its own window, and stacking
the resulting $K$ equations yields the homogeneous linear system
\begin{equation}
\bigl[\,M_D \;\; -M_N\,\bigr]
\begin{bmatrix}
\theta_D \\ \theta_N
\end{bmatrix}
= 0,
\label{eq:weak_system}
\end{equation}
where the matrix entries are
\begin{align}
(M_D)_{kj}
&=
-\int u_2 \Bigl(
\phi_{D,j}\,\psi_k'
+ (\partial_{u_0}\phi_{D,j})\,u_1\,\psi_k \notag\\
&\qquad\qquad
+ (\partial_{u_1}\phi_{D,j})\,u_2\,\psi_k
\Bigr)\,\mathrm{d}t,
\label{eq:MD_entries} \\
(M_N)_{k\ell}
&=
\int \phi_{N,\ell}(\mathbf{u})\,\psi_k\,\mathrm{d}t,
\label{eq:MN_entries}
\end{align}
with $\phi_{D,j}\in\Phi_D$ and $\phi_{N,\ell}\in\Phi_N$.

The weak-form matrices $(M_D,M_N)$ are used in a two-stage recovery, with the
denominator estimated first because the singular-set geometry it encodes is the
fragile part of the closure.
Both stages operate on the column-normalized joint
matrix $[\,M_D\;\;-M_N\,]$ of Eq.~\eqref{eq:weak_system}, whose smallest right
singular vector $[\theta_D^\top,\theta_N^\top]^\top$ is the null-space estimate;
only the denominator block $\theta_D$ is retained at this stage. In the clean
limit a single global SVD of $[\,M_D\;\;-M_N\,]$ provides $\theta_D$ directly.
Under noise the trajectory is divided into $12$ overlapping local windows
($120$ samples each), and a separate joint SVD on each window yields a local
denominator estimate. Because each is defined only up to sign and scale, the
local estimates are unit-normalized, sign-aligned so that the coefficient of
$u_0$ is positive, averaged componentwise, and renormalized to give the pooled
denominator~\cite{fasel2022ensemble}. These $12$ averaging windows are distinct
from the $K$ test-function windows used to assemble $(M_D,M_N)$.
Once the denominator has been fixed, the numerator coefficients are obtained from
the linear system
\begin{equation}
M_N\,\theta_N = M_D\,\theta_D,
\label{eq:pin_system}
\end{equation}
solved by ordinary least squares over the test-function windows.

The identified closure defines the autonomous system
$\dot{u}_0=u_1$, $\dot{u}_1=u_2$, and
$\dot{u}_2 = N(\mathbf{u})/D(\mathbf{u})$.
Forecasts are integrated numerically with DOP853~\cite{hairer1993solving}. The recovered closure is
integrated directly, without additional refinement or postprocessing.

Forecast accuracy is measured by the valid prediction time (VPT), the elapsed
time over which a forecast tracks the truth. Forecasts are scored on $t\in[0,H]$
(duration $H$; $H/\Delta t+1$ samples at spacing $\Delta t$). If no failure
occurs within this window, VPT is reported as $H$. Over the scoring window,
the normalized pointwise error is
\begin{equation}
  e(t)=
  \frac{|u_0^{\mathrm{pred}}(t) - u_0^{\mathrm{true}}(t)|}
       {\mathrm{std}(u_0^{\mathrm{true}}\vert_{[0,H]})},
\label{eq:vpt}
\end{equation}
and the prediction is regarded as valid up to the onset of the first run of
$20$ consecutive samples with $e(t) > \varepsilon_{\mathrm{thr}}$, where the error
threshold is $\varepsilon_{\mathrm{thr}}=0.2$. In other words, isolated threshold crossings are
ignored unless they persist for $20$ consecutive samples.

For each realization, we test the recovered closure by forecasting from eight
starting points spaced evenly across the interior of the trajectory, scoring
each forecast by its valid prediction time. All eight points lie on the same
trajectory used for identification, so this is a within-trajectory assessment of
the recovered dynamics: we restart the model at eight points along the
identification trajectory and forecast forward from each. These eight windows
probe robustness within a single trajectory; they are distinct from independent noise realizations.
Each realization is summarized by the best valid prediction time over its eight
initial-condition windows.

\begin{figure}[H]
\centering
\scalebox{\figscale}{%
\begin{tikzpicture}[
  stage/.style={draw, rounded corners=2pt, minimum width=8cm,
    minimum height=1.05cm, align=center, font=\small},
  arr/.style={-stealth, thick},
  every node/.style={inner sep=4pt}
]
\node[stage] (s1) at (0,0)
  {\textbf{Scalar Signal + Noise}\\[-1pt]
   \scriptsize scalar samples, observation noise, robust noise scale};

\node[stage] (s2) at (0,-1.55)
  {\textbf{Differential Embedding}\\[-1pt]
   \scriptsize Reinsch spline for $u_0$, $u_1$, and $u_2$};

\node[stage] (s3) at (0,-3.1)
  {\textbf{Weak-Form Identification}\\[-1pt]
   \scriptsize windowed integral constraints with test functions};

\node[stage] (s4) at (0,-4.65)
  {\textbf{Closure Recovery}\\[-1pt]
   \scriptsize denominator estimate, numerator fit};

\node[stage] (s5) at (0,-6.2)
  {\textbf{Forecast + VPT}\\[-1pt]
   \scriptsize DOP853 integration, VPT scoring};

\draw[arr] (s1) -- (s2);
\draw[arr] (s2) -- (s3);
\draw[arr] (s3) -- (s4);
\draw[arr] (s4) -- (s5);
\end{tikzpicture}%
}
\caption{Reconstruction pipeline for the reported method. A noisy scalar signal
  is converted into differential embedding coordinates, used to assemble
  weak-form regression matrices, and then mapped to a rational closure whose
  forecast skill is evaluated by VPT.}
\label{fig:pipeline}
\end{figure}

\section{Results}
\label{sec:results}

We report results for the Lorenz~\cite{lorenz1963deterministic} and
R\"ossler~\cite{rossler1976equation} systems, each observed through the single
scalar signal $s=x$; the full set of hyperparameters is collected in
Table~\ref{tab:settings} (\ref{app:settings}). Across realizations the
model family, spline procedure, weak-form matrices, and scoring rule are held
fixed; only the noise realization, and hence the recovered closure, varies. In each
realization one closure is identified from the full scalar record and evaluated on
eight interior initial-condition windows drawn from the same trajectory, so
identification and forecasting use the same simulated trajectory.

Figure~\ref{fig:fnn} shows the FNN fractions for both systems, with delays chosen
from the first local minimum of the average mutual
information~\cite{fraser1986independent}. In both cases the FNN fraction drops
below $1\%$ at $m=3$, consistent with the three-coordinate reconstruction used here.

\begin{figure}[H]
\centering
\scalebox{1.0}{%
\includegraphics[width=0.95\textwidth]{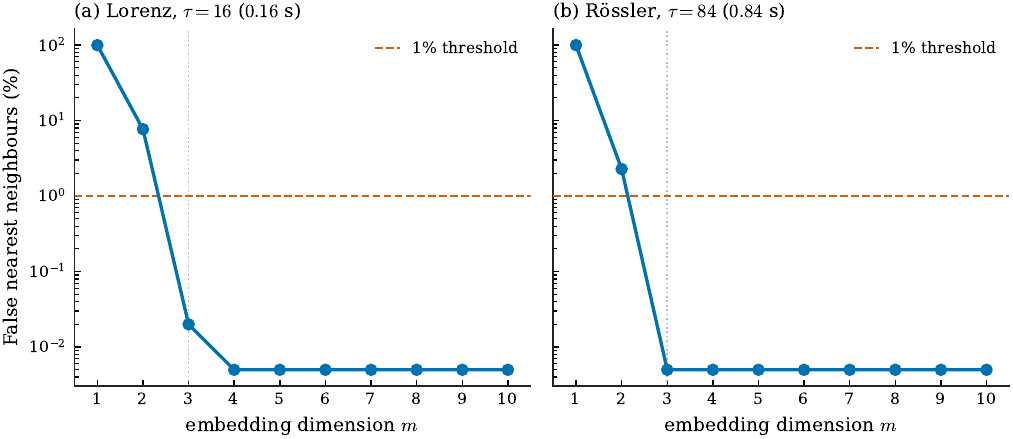}%
}
\caption{False nearest neighbours (FNN) fraction versus embedding dimension
  $m$ for Lorenz ($\tau=16$, i.e.\ $0.16$ in Lorenz time) and R\"{o}ssler ($\tau=84$,
  i.e.\ $0.84$ in R\"{o}ssler time), with delays chosen at the first local minimum of the
  average mutual information. In both cases, the FNN fraction falls below
  $1\%$ (dashed red line) at $m=3$.}
\label{fig:fnn}
\end{figure}

As mentioned in Section~\ref{sec:method}, the denominator and numerator degrees ($d_D$ and $d_N$ respectively) 
along with the weak-form parameters are free hyperparameters of the closure family. 
We tune these hyperparameters by maximizing the invariant metric VPT as defined in Eq.~\eqref{eq:vpt}.
The VPT metric compares a forecast with the measured continuation of the signal which requires no knowledge 
of the governing equations. The VPT metric is invariant because the
error is normalized by the signal's own standard deviation. 

Both benchmark systems are dimensionless, so their model time carries no physical
unit; we refer to it as \emph{Lorenz time} and \emph{R\"ossler time} respectively.
The two systems evolve on very different intrinsic time scales, so model time is
not comparable between them. In this section we therefore report forecast horizons
both in model time and in Lyapunov times $\Lambda=\lambda_1 T$, where $\lambda_1$
is the largest Lyapunov exponent of the system. Perturbations grow approximately as
$e^{\lambda_1 t}$~\cite{wolf1985determining}, so R\"ossler's much smaller
$\lambda_1$ means that a forecast that is longer in R\"ossler time can still span
fewer e-folding times of error growth: in model time R\"ossler forecasts longer
than Lorenz at every noise level, whereas in Lyapunov times the ordering reverses.
Unless otherwise stated, elsewhere in the paper we quote Lyapunov times only, as they are the measure that compares across systems.

\subsection{Noise-free data}
\label{sec:results_clean}

In the noise-free case, the selected degrees for the Lorenz and Rossler systems were
($d_D=1$, $d_N=4$) and ($d_D=1$, $d_N=3$) respectively. These coincide with the degrees of the analytic Lorenz and R\"ossler
systems. We note that this is an outcome of the selection rather than an input to it: the
degrees were chosen by maximizing forecast skill, without reference to the analytic
closure.

Let $\hat N$ and $\hat D$ denote the recovered numerator and denominator
coefficient vectors and $N^\star,D^\star$ their analytic counterparts, both
expressed as coefficients over the same set of monomials so that corresponding
terms can be compared directly. 

For each system, we report the results with respect to the realization achieving the highest VPT.
A rational closure is unchanged under a common rescaling
$(N,D)\mapsto(\kappa N,\kappa D)$, so $\hat N$ and $\hat D$ are determined only up
to a single scale factor $\kappa$ (\ref{app:nullspace}). The same gauge
freedom arises in implicit and rational sparse identification, where the model
is likewise recovered from a null space~\cite{mangan2016inferring,kaheman2020sindy}.
Comparison with the analytic closure therefore requires a choice of gauge; we fix
$\kappa$ by a least-squares fit of $\kappa\hat D$ to $D^\star$, so that the
recovered coefficients are expressed on the analytic scale.

For both the numerator and the denominator we report the relative $\mathrm{L}^2$
error
\begin{equation}
\frac{\lVert \kappa\hat X - X^\star \rVert_2}{\lVert X^\star \rVert_2},
\qquad X\in\{N,D\},
\label{eq:rel_l2}
\end{equation}
where $\lVert\cdot\rVert_2$ is the Euclidean norm on the coefficient vectors over
the common monomial basis.

In the absence of observation noise the closure is recovered essentially exactly.
Tables~\ref{tab:coeff_clean_lorenz} and~\ref{tab:coeff_clean_rossler} compare
$\kappa\hat N$ with $N^\star$ for each system:
the relative errors are of order $10^{-5}$ on every monomial, and the denominator
relative $\mathrm{L}^2$ errors are $5.0\times10^{-5}$ (Lorenz) and
$2.4\times10^{-7}$ (R\"ossler). The recovered scale reproduces the analytic
denominator exactly ($\kappa=3$ for Lorenz, where $D^\star=3u_0$). Exact recovery at
the selected degree confirms that the method identifies the closure itself, not
merely a forecast-capable surrogate.

\begin{table}[H]
\centering\small
\caption{Clean-data ($\sigma_{\mathrm{frac}}=0$) coefficient recovery for the
  best-forecast Lorenz realization. $N^\star$ is the analytic numerator
  coefficient; Rel.\ $\mathrm{L}^2$ is each monomial's contribution to the
  numerator relative $\mathrm{L}^2$ error, $|\kappa\hat N_i-N^\star_i|/\lVert
  N^\star\rVert_2$, so rows sum in quadrature to the total below. $\kappa=3$ is
  fixed by a least-squares fit of $\kappa\hat D$ to $D^\star=3u_0$.}
\label{tab:coeff_clean_lorenz}
\scalebox{\tabscale}{%
\begin{tabular}{lrr}
\toprule
Monomial & $N^\star$ & Rel.\ $\mathrm{L}^2$ \\
\midrule
$u_1 u_2$    & $3$    & $4.2\times10^{-8}$ \\
$u_1^{2}$    & $33$   & $1.1\times10^{-6}$ \\
$u_0 u_2$    & $-41$  & $7.2\times10^{-8}$ \\
$u_0 u_1$    & $-88$  & $9.8\times10^{-6}$ \\
$u_0^{2}$    & $2160$ & $3.2\times10^{-5}$ \\
$u_0^{3}u_1$ & $-3$   & $1.5\times10^{-7}$ \\
$u_0^{4}$    & $-30$  & $4.0\times10^{-7}$ \\
\midrule
\multicolumn{3}{l}{\footnotesize Total $N$ rel.\ $\mathrm{L}^2\approx3.3\times10^{-5}$ \quad $D$ rel.\ $\mathrm{L}^2=5.0\times10^{-5}$} \\
\bottomrule
\end{tabular}%
}
\end{table}

\begin{table}[H]
\centering\small
\caption{Clean-data ($\sigma_{\mathrm{frac}}=0$) coefficient recovery for the
  best-forecast R\"ossler realization. Columns as in Table~\ref{tab:coeff_clean_lorenz};
  $\kappa=500$, fixed the same way as for Lorenz.}
\label{tab:coeff_clean_rossler}
\scalebox{\tabscale}{%
\begin{tabular}{lrr}
\toprule
Monomial & $N^\star$ & Rel.\ $\mathrm{L}^2$ \\
\midrule
$1$          & $-118$              & $3.4\times10^{-8}$ \\
$u_2$        & $1.622\times10^{4}$ & $5.2\times10^{-6}$ \\
$u_1$        & $-313$              & $9.7\times10^{-8}$ \\
$u_1 u_2$    & $500$               & $2.6\times10^{-7}$ \\
$u_1^{2}$    & $-100$              & $2.0\times10^{-8}$ \\
$u_0$        & $1.684\times10^{4}$ & $5.4\times10^{-6}$ \\
$u_0 u_2$    & $-5700$             & $2.8\times10^{-6}$ \\
$u_0 u_1$    & $1160$              & $4.7\times10^{-7}$ \\
$u_0^{2}$    & $-5800$             & $2.8\times10^{-6}$ \\
$u_0^{2}u_2$ & $500$               & $3.6\times10^{-7}$ \\
$u_0^{2}u_1$ & $-100$              & $1.9\times10^{-7}$ \\
$u_0^{3}$    & $500$               & $3.4\times10^{-7}$ \\
\midrule
\multicolumn{3}{l}{\footnotesize Total $N$ rel.\ $\mathrm{L}^2\approx8.4\times10^{-6}$ \quad $D$ rel.\ $\mathrm{L}^2=2.4\times10^{-7}$} \\
\bottomrule
\end{tabular}%
}
\end{table}

The clean forecasts are correspondingly long. For Lorenz, the best realization and
the top-five mean both reach the $20$ Lorenz-time scoring horizon ($18.1$ Lyapunov
times), while the median realization reaches $11.6$ Lorenz time ($10.5$ Lyapunov
times); this median is consistent with the Lyapunov-amplification ceiling set by
the integrator tolerance (\ref{app:tol}). For R\"ossler, the best
realization and the top-five mean both reach the $100$ R\"ossler-time horizon
($7.1$ Lyapunov times). Figure~\ref{fig:attractor}
shows that the recovered closures also reproduce the geometry of the clean
attractors under long rollout.

\begin{figure}[H]
\centering
\scalebox{1.0}{%
\includegraphics[width=\textwidth]{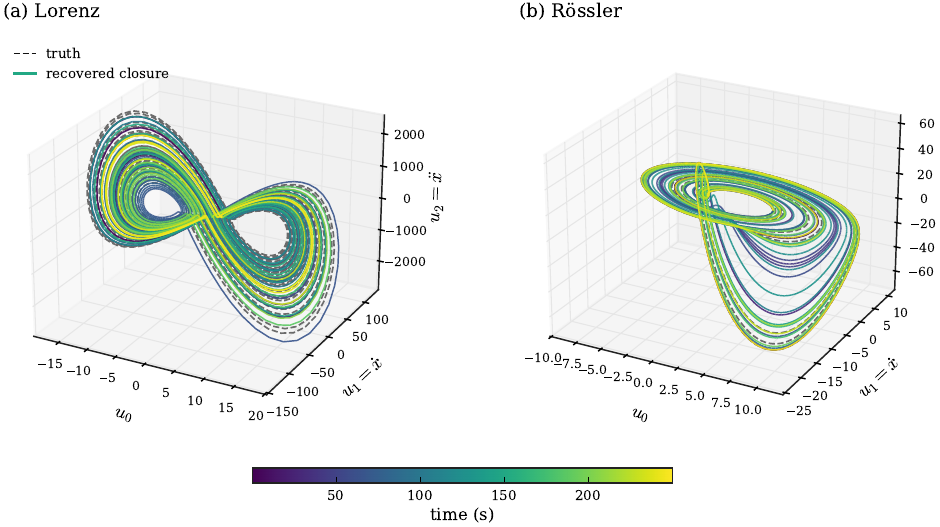}%
}
\caption{Three-dimensional differential-embedding $(u_0,u_1,u_2)$ attractors on
  clean data ($\sigma=0$): truth trajectory versus a long rollout of the
  recovered closure, shown for one successful realization on each
  system. The recovered closures reproduce the geometry of the clean
  attractors.}
\label{fig:attractor}
\end{figure}

\subsection{Noisy data}
\label{sec:results_noisy}

Under the addition of Gaussian noise, the degrees for the Lorenz and Rossler systems 
needed to be selected again by means of maximizing VPT as we did in the noise-free case. 
The selected degrees for the Lorenz and Rossler systems were 
($d_D=1$, $d_N=4$) and ($d_D=1$, $d_N=3$) respectively 
(as we had in the noise-free case).

However, under $15$--$30\%$ observation noise the two systems separate. For Lorenz
the best realization reaches $4.21$ Lorenz time ($3.81$ Lyapunov times) at
$\sigma_{\mathrm{frac}}=0.15$ and $4.33$ ($3.92$) at $0.30$, with top-five means of
$3.66$ ($3.32$) and $3.67$ ($3.32$). For R\"ossler the best realization reaches
$10.0$ R\"ossler time ($0.71$ Lyapunov times) and $8.61$ ($0.61$), with top-five
means of $7.78$ ($0.55$) and $7.83$ ($0.56$). Lorenz therefore loses most of its
clean-data horizon, while R\"ossler retains a larger fraction of its own.

Coefficient recovery does not survive the noise. The recovered null vector is
rotated away from the analytic direction rather than perturbed about it, so the
coefficients are not a noisy estimate of the true ones but a different direction in
coefficient space: relative errors reach order unity or larger on most monomials.
For the best-forecast realizations, the denominator relative $\mathrm{L}^2$ error
rises from $5.0\times10^{-5}$ (Lorenz) and $2.4\times10^{-7}$ (R\"ossler) on clean
data to $0.28$ and $0.21$ (Lorenz) and $0.94$ and $0.99$ (R\"ossler) at
$\sigma_{\mathrm{frac}}=0.15$ and $0.30$ respectively. A term-by-term comparison is
therefore uninformative in this regime. Coefficient accuracy also does not
correlate with forecast skill once noise is present
(Section~\ref{sec:cosine}). We therefore report forecasts here.

\begin{table}[H]
\centering
\small
\caption{Valid prediction time (VPT) for the recovered weak-form rational closure across the Lorenz and
R\"ossler benchmark systems per noise level. For each system and noise level we report, over the
$5{,}000$-realization sweep, the best VPT and the mean of the top five
realizations, each given in model time and in Lyapunov times $\Lambda=\lambda_1 T$
(largest Lyapunov exponents $\lambda_1=0.906$ for Lorenz and $0.071$ for
R\"ossler~\cite{sprott2003chaos}, in inverse model time). Each
realization is evaluated on eight initial-condition windows from the same
trajectory using the sustained threshold rule ($\varepsilon_{\mathrm{thr}}=0.2$, $20$
consecutive samples).}
\label{tab:vpt}
\resizebox{\linewidth}{!}{%
\begin{tabular}{llcccc}
\toprule
System / observable & $\sigma_{\mathrm{frac}}$ & Best (model time) & Top-5 (model time)
  & Best ($\Lambda$) & Top-5 ($\Lambda$) \\
\midrule
Lorenz $s=x$      & $0.00$ & $20.00$ & $20.00$ & $18.1$ & $18.1$ \\
Lorenz $s=x$      & $0.15$ & $4.21$  & $3.66$  & $3.81$ & $3.32$ \\
Lorenz $s=x$      & $0.30$ & $4.33$  & $3.67$  & $3.92$ & $3.32$ \\
\midrule
R\"{o}ssler $s=x$ & $0.00$ & $100.00$ & $100.00$ & $7.10$ & $7.10$ \\
R\"{o}ssler $s=x$ & $0.15$ & $10.00$ & $7.78$  & $0.71$ & $0.55$ \\
R\"{o}ssler $s=x$ & $0.30$ & $8.61$  & $7.83$  & $0.61$ & $0.56$ \\
\bottomrule
\end{tabular}%
}
\end{table}

Table~\ref{tab:vpt} collects the valid prediction times for both systems across all
noise levels, and Figures~\ref{fig:overlay} and~\ref{fig:overlay_rossler} overlay
the recovered forecast with the true $x(t)$ at each noise level for Lorenz and
R\"ossler respectively.

\begin{figure}[H]
\centering
\scalebox{1.0}{%
\includegraphics[width=0.88\textwidth]{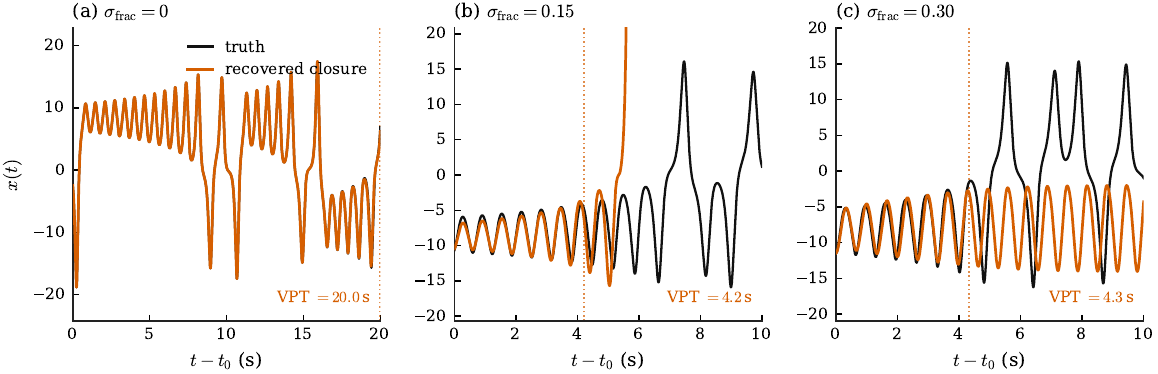}%
}
\caption{Lorenz prediction horizon overlaid with the true trajectory at each noise level.
  Each panel compares the recovered closure forecast with the true
  $x(t)$ and reports the corresponding valid prediction time (VPT). The clean
  case reaches the $20$ Lorenz-time horizon, while the noisy cases remain near
  $4$ Lorenz time ($\approx3.6$ Lyapunov times), consistent with Table~\ref{tab:vpt}.}
\label{fig:overlay}
\end{figure}

\begin{figure}[H]
\centering
\scalebox{1.0}{%
\includegraphics[width=0.88\textwidth]{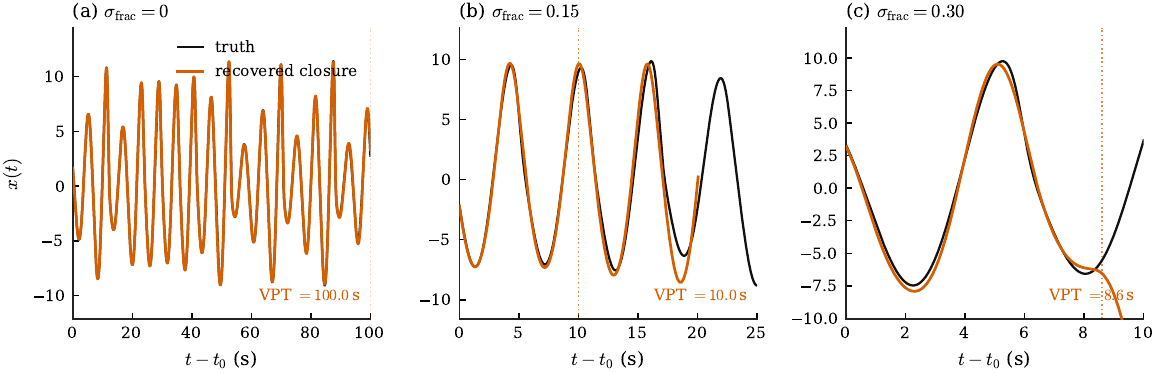}%
}
\caption{R\"ossler prediction horizon overlaid with the true trajectory at each noise level. Each panel compares the recovered closure forecast with the
  true $x(t)$ and reports the corresponding valid prediction time (VPT). The
  clean case reaches the $100$ R\"ossler-time horizon, while the noisy cases remain near
  $10$ R\"ossler time ($\approx0.7$ Lyapunov times), consistent with Table~\ref{tab:vpt}.}
\label{fig:overlay_rossler}
\end{figure}

\section{Discussion}
\label{sec:discussion}

\subsection{Closure-direction diagnostics}
\label{sec:cosine}

As reported in Section~\ref{sec:results_noisy}, under noise the recovered
coefficients are rotated away from the analytic direction, with relative errors
of order unity on most monomials; a term-by-term magnitude comparison is
therefore uninformative. A complementary, scale-invariant diagnostic is how well
the recovered closure reproduces the \emph{direction} of the analytic coefficient
vector. For two coefficient vectors $\mathbf{a}$ and
$\mathbf{b}$ we use the cosine similarity
\begin{equation}
\cos(\mathbf{a},\mathbf{b})
  = \frac{\mathbf{a}\cdot\mathbf{b}}
         {\lVert\mathbf{a}\rVert\,\lVert\mathbf{b}\rVert},
\label{eq:cosine}
\end{equation}
which lies in $[-1,1]$ and is invariant to a positive rescaling of either vector.
We write $\cos_N=\cos(\hat N,N^\star)$ for the numerator and
$\cos_D=\cos(\hat D,D^\star)$ for the denominator, where $\hat N,\hat D$ are the
recovered coefficients and $N^\star,D^\star$ the analytic ones in the common
monomial basis; both isolate whether the recovered closure points the right way in
coefficient space, independently of the scale factor $\kappa$.

Across the noisy Lorenz ensemble the denominator direction is reliably recovered
(median $\cos_D\approx0.93$), whereas the numerator direction is recovered cleanly
($\cos_N>0.9$) in only about a quarter of realizations and collapses
($\cos_N\approx0$) in most of the rest. Neither direction predicts forecast skill:
both are essentially uncorrelated with VPT (rank correlation $|\rho|\le0.04$ at both
noise levels), and the longest forecasts are not the best-recovered ones. At
$\sigma_{\mathrm{frac}}=0.15$, for instance, the single longest forecast has
$\cos_N\approx0.04$, and most of the twenty longest forecasts at each noise level
have $\cos_N<0.1$. Recovering the coefficient direction is thus neither necessary nor sufficient
for a long forecast. We restrict this analysis to Lorenz, for which a
mislocated singular set collapses the forecast under noise.

Direction accuracy therefore de-correlates from forecast skill, just as the
coefficient magnitudes do. Figure~\ref{fig:pole_decoupling} shows the underlying
geometry: the denominator direction is recovered ($\cos_D\approx0.93$) at every
forecast horizon, but for the affine denominator
$\bar D=\bar c_{00}+\bar c_{10}\,u_0$ the implied pole
$\hat p=-\bar c_{00}/\bar c_{10}$ scatters about its true value $0$
(standard deviation $\approx0.27$, roughly half sign reversals) even when the
direction is well recovered. In short, we find that pole location (as opposed to
closure direction) governs forecast skill.

\begin{figure}[H]
\centering
\scalebox{\figscale}{%
\includegraphics[width=\textwidth]{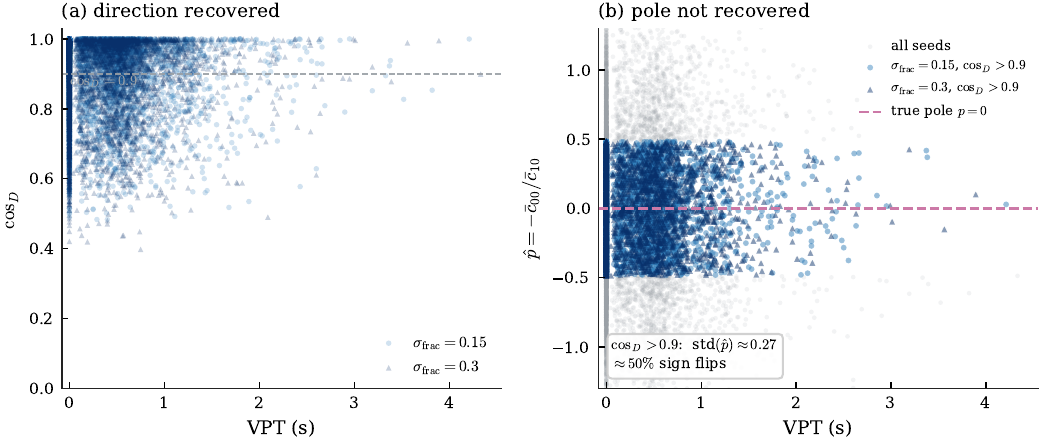}%
}
\caption{Denominator direction-versus-pole decoupling on noisy Lorenz ($s=x$).
  (a) The recovered denominator direction $\cos_D$ clusters high (median
  $\approx0.93$) at every valid prediction time. (b) The recovered pole of the
  affine denominator, $\hat p=-\bar c_{00}/\bar c_{10}$, scatters about its
  true value $0$ with standard deviation $\approx0.27$ and about $50\%$ sign
  reversals even when the direction is well recovered ($\cos_D>0.9$,
  highlighted). The averaging step pins the denominator direction but not the
  pole location.}
\label{fig:pole_decoupling}
\end{figure}

\subsection{Observability analysis}
\label{sec:obs}

To assess observability, we compute the
Letellier--Aguirre--Maquet coefficient~\cite{letellier2005relation}
\begin{equation}
\delta = \frac{\lambda_{\min}(J_\Phi^\top J_\Phi)}{\lambda_{\max}(J_\Phi^\top J_\Phi)}
  \in [0,1],
\label{eq:obs_coeff}
\end{equation}
where $J_\Phi=J_{\pi_J(\mathbf{U})}$ is the Jacobian of the differential-embedding map $\pi_J$
evaluated along the trajectory, mapping perturbations of the hidden state to
perturbations of the differential-embedding coordinates. The ratio compares the
weakest and strongest local stretching directions of this map: values near $1$
indicate that all state directions are captured comparably, whereas $\delta\ll1$
signals a nearly hidden direction and hence weak local observability.

We compute $J_\Phi$ along a trajectory of $500$ model time units ($50{,}000$ samples) for both
systems. For Lorenz with $s=x$, the median value is $\delta\approx
7.2\times10^{-6}$ and all sampled points lie below $10^{-3}$; within the region
$|x|<0.5$, which accounts for about $4.8\%$ of the trajectory, the median
decreases further to $\approx 1.2\times10^{-8}$. This region lies along the
analytic singular set $D^\star=3u_0=0$, so the same part of the attractor is both
closest to the closure pole and least observable. For R\"ossler with $s=x$, by
contrast, observability is far stronger: the median is $\delta\approx
1.3\times10^{-2}$ and only about $5.6\%$ of samples fall below $10^{-3}$, compared
with all of them for Lorenz. Its worst-observed region localizes near
$\{x=5.9\}$, where $D^\star\propto 500\,u_0-2950$ vanishes.

Figures~\ref{fig:lorenz_obs} and~\ref{fig:rossler_obs} compare $\log_{10}\delta$
across the coordinate observables $s\in\{x,y,z\}$ for the two systems. The full
screening (\ref{app:observability_screening} for Lorenz,
\ref{app:observability_screening_rossler} for R\"ossler) extends this to
twelve and eleven observables respectively: no tested Lorenz observable improves
the median $\delta$ by more than a factor of five over $s=x$, so within the
polynomial class the Lorenz reconstruction remains strongly nonuniform, whereas
R\"ossler admits a uniformly observable choice, $s=y$ ($\delta\equiv0.141$,
$|\det J_\Phi|\equiv1$).

\begin{figure}[H]
\centering
\scalebox{\figscale}{%
\includegraphics[width=\textwidth]{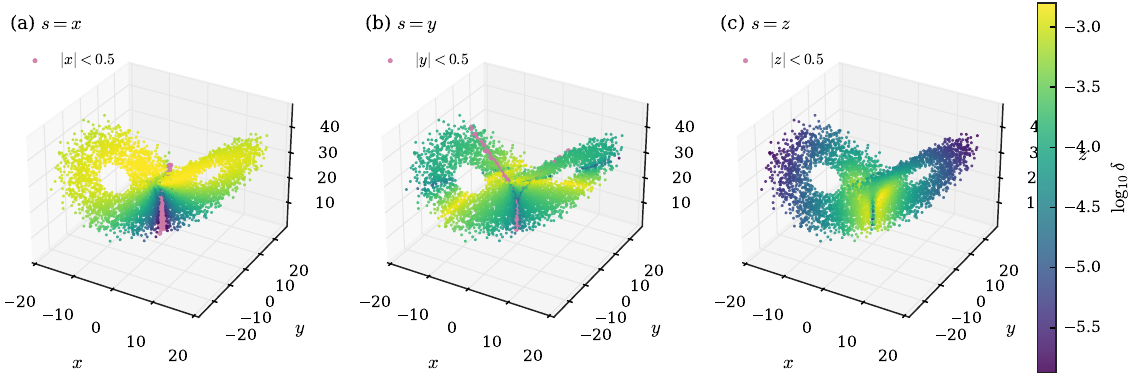}%
}
\caption{Lorenz observability coefficient $\log_{10}\delta$ for the scalar
  observables $s\in\{x,y,z\}$. Red regions indicate neighborhoods near $s=0$
  where observability is weakest. For the benchmark observable $s=x$, the
  least observable region lies along the analytic singular set
  $D^\star=3u_0=0$.}
\label{fig:lorenz_obs}
\end{figure}

\subsection{Comparison with prior forecasting methods}
\label{sec:comparison}

Table~\ref{tab:method_comparison} places our clean-data result alongside prior
methods for forecasting the continuation of the noise-free Lorenz $x$ time series,
extending the compilation of Wang and Guet~\cite{wang2021reconstructing}. Two axes
organize the comparison. The first is the observable: our method, together with
Wang and Guet and the scalar echo-state networks, forecasts from a single
coordinate, whereas reservoir computers and the LSTM use the full state
$(x,y,z)$. Because every entry in the table is a Lorenz forecast, we quote this
comparison in Lorenz time, as reported in the source studies. Among the
scalar-observable methods, our median prediction horizon ($11.6$ Lorenz time)
matches the best previously reported ($11.5$), and our best-realization horizon saturates the $20$-unit scoring window; because we measure the
horizon with a strict valid-prediction-time threshold ($0.2$ standard deviations,
sustained), it is if anything more conservative than the visual criterion used in
several of the cited studies. Typical full-state reservoir forecasts
($\sim$5--7 Lorenz time) fall below this, although specialized full-state methods
under noise-free, extended-precision conditions have recently reported much longer
horizons~\cite{gauthier2021next}; such results require the full state and idealized
data and are not comparable to the scalar-observable setting here.

The second axis is the model form. Reservoir and deep-network forecasters are
black boxes that return no governing equations, while sparse-regression methods
such as SINDy and its rational variants recover explicit equations but from
full-state data and do not report a Lorenz forecast
horizon~\cite{brunton2016discovering,mangan2016inferring}; deep delay
autoencoders recover sparse equations from a scalar series, but through a
trained neural coordinate transformation, so the recovered model lives in
learned latent coordinates rather than in fixed, analytically defined
embedding coordinates~\cite{bakarji2023discovering}; HAVOK yields a
linear-plus-forcing model in delay coordinates but is used to flag
regime-switching events rather than to forecast the
trajectory~\cite{brunton2017havok}. Our method occupies a distinct position: it
recovers an explicit rational closure from a scalar observable, validated both by
forecast skill and by direct coefficient-level comparison with the analytic
closure. It therefore matches the strongest scalar-observable forecasts while
additionally yielding interpretable governing equations in fixed embedding
coordinates. This comparison is for
the noise-free case, matching Table~\ref{tab:method_comparison}; the noisy-data
behavior of our method is reported in Section~\ref{sec:results}.

\begin{table}[t]
\centering\small
\caption{Prediction horizon for continuation of the noise-free Lorenz $x$ time
  series (units of Lorenz time). Scalar-observable methods use only $x$; full-state
  methods use $(x,y,z)$. Horizons are as
  reported under each study's own accuracy criterion. Because every entry is a Lorenz forecast, horizons are quoted in Lorenz time as reported in the source studies rather than in Lyapunov time; multiply by $\lambda_1=0.906$ to convert.}
\label{tab:method_comparison}
\resizebox{\linewidth}{!}{%
\begin{tabular}{llcl}
\toprule
Method & Details of the model & Obs. & Horizon \\
\midrule
\multicolumn{4}{l}{\textit{Scalar observable}}\\
Autoregressive~\cite{sauer1994time} & local linear model, delay coords & $x$ & $4.5$ \\
Neural difference eqn~\cite{ouala2020learning} & neural map, latent space & $x$ & $11$ \\
Neural ODE~\cite{ouala2020learning,chen2018neural} & adjoint training, latent space & $x$ & $0.5$ \\
Echo-state network$^{\dagger}$~\cite{mahata2023variability} & scalar-input ESN (median over ICs) & $x$ & $\sim$4.5$^{\ddagger}$ \\
Self-consistent (delay)~\cite{wang2021reconstructing} & Wang \& Guet, delay coords & $x$ & $8.5$ \\
Self-consistent (latent)~\cite{wang2021reconstructing} & Wang \& Guet, autoencoder latent & $x$ & $11.5$ \\
\textbf{This work} & rational closure, diff.-embedding coords; \emph{explicit eqns.} & $x$ & $11.6$\,(med.), $20$\,(best)$^{\ast}$ \\
\midrule
\multicolumn{4}{l}{\textit{Full state}}\\
LSTM~\cite{dubois2020data} & full observation + derivatives & $x,y,z$ & $3$ \\
Next-gen reservoir$^{\dagger}$~\cite{gauthier2021next} & polynomial (NVAR) reservoir & $x,y,z$ & $\sim$5.5$^{\ddagger}$ \\
Reservoir computer$^{\dagger}$~\cite{pathak2017replicate} & echo-state network & $x,y,z$ & $\sim$7$^{\ddagger}$ \\
\bottomrule
\end{tabular}%
}

\vspace{2pt}
{\footnotesize
$^{\ast}$ Valid prediction time at the $0.2$-std sustained threshold; the best value
saturates the $20$-unit scoring window. Unlike every other entry, this method also
returns an explicit rational closure.\\
$^{\ddagger}$ The cited study reports this as a qualitative divergence time, or in
Lyapunov times; values given in Lyapunov times are converted here using the Lorenz
Lyapunov time of $1.1$ model time units~\cite{gauthier2021next}.
}
\end{table}

\subsection{Limitations}

The main limitation is the fragility of the singular-set estimate to noise. The
recovery pins the denominator direction but not the pole location, so even modest
noise leaves the pole essentially unconstrained;
this is most acute for Lorenz with $s=x$, whose singular set passes through the
center of the attractor, the least observable region. Once the pole is mislocated
the forecast collapses well before the clean-data horizon, so the usable noise
level is low and strongly system-dependent rather than uniform. In this regime the recovery 
is fragile: small perturbations of the noisy denominator estimate shift the inferred pole
substantially, so the usable noise level is low and strongly system-dependent
rather than uniform. Coefficient accuracy degrades under noise as well, so for
Lorenz the forecast results are more informative than term-by-term coefficient
comparison.

\subsection{Future work}

Several extensions follow naturally. The reported benchmark fixes an affine
denominator ($d_D=1$) and a degree-four numerator ($d_{\max}=4$); these happen to
coincide with the analytic Lorenz and R\"ossler closures, so a principled
degree-selection criterion, for $d_D$ as much as for $d_N$, is needed before the
method can be applied to systems whose closure structure is unknown. The noise model
is restricted to additive Gaussian observation noise, and non-Gaussian, structured,
and multiplicative noise are natural to test next. Evaluation here is
within-trajectory, so cross-trajectory transfer tests would probe generalization.

Two routes could address the noise-fragility limitation more directly. 
A learned denoising front-end, such as a denoising
autoencoder~\cite{vincent2010stacked}, could regularize the noisy scalar record
before differentiation and supply a signal clean enough for the present pipeline
to recover a well-constrained singular set, in place of the spline smoothing used
here. This robustness would come at a computational cost, however: the Reinsch spline
used here is effectively instantaneous, whereas a neural denoiser must be trained
and is far more expensive to evaluate, so any gain in noise tolerance has to be
weighed against a substantial increase in runtime. Independently, constraining the
recovered closure so that $N/D$ remains bounded near the estimated singular set
$\{u_0=\hat p\}$ may improve forecast stability.

\section{Conclusion}
\label{sec:conclusion}

We have shown that a scalar-observable rational closure in differential embedding
coordinates can be identified directly from data by a weak-form recovery pipeline,
yielding an explicit and interpretable model. In the reported benchmark the
pipeline reaches the clean forecast horizons in the best cases on both Lorenz and
R\"ossler; and because the recovered coefficients are directly comparable with the
analytic closures, we can separate forecast skill from coefficient accuracy, since
under noise a forecast can remain useful even when individual coefficients are
poorly recovered. At $15$--$30\%$ additive noise the performance becomes strongly
system-dependent, with Lorenz supporting only short best-case forecasts and
R\"ossler longer ones in absolute time, though the ordering reverses in Lyapunov
times. The main limitation is the fragility of the singular-set
estimate to noise: because the recovery constrains the denominator direction but not the pole
location, even modest noise leaves the singular set essentially unconstrained,
most acutely for Lorenz, whose singular set lies in the least observable region of
the attractor. Overcoming this, for instance through a learned denoising front-end or a principled
choice of closure library, together with tests under broader noise models and
across trajectories, is a natural next step.

\section*{Acknowledgements}
A.S. thanks Professor Xavier Garbet (NTU) for his full support.

\section*{Funding}
This work was supported by the National Research Foundation, Singapore
[core funding ``Fusion Science for Clean Energy''].

\section*{CRediT authorship contribution statement}
\textbf{Ameir Shaa:} Conceptualization, Methodology, Software, Formal analysis,
Investigation, Visualization, Writing -- original draft, Writing -- review \& editing.
\textbf{Claude Guet:} Conceptualization, Supervision, Validation,
Writing -- review \& editing.

\section*{Declaration of competing interest}
The authors declare that they have no known competing financial interests or
personal relationships that could have appeared to influence the work reported
in this paper.

\section*{Data availability}
The code and data that reproduce the results and figures of this study are
available from the corresponding author upon reasonable request.


\newpage

\appendix
\makeatletter
\@addtoreset{figure}{section}
\@addtoreset{table}{section}
\makeatother
\setcounter{figure}{0}
\setcounter{table}{0}
\section{Benchmark settings}
\label{app:settings}

The two benchmark systems are the Lorenz~\cite{lorenz1963deterministic} system
\begin{equation}
\dot{x} = \sigma(y-x), \qquad
\dot{y} = x(\rho-z)-y, \qquad
\dot{z} = xy-\beta z,
\label{eq:lorenz}
\end{equation}
at the classical parameters $(\sigma,\rho,\beta)=(10,\,28,\,8/3)$, and the
R\"ossler~\cite{rossler1976equation} system
\begin{equation}
\dot{x} = -y-z, \qquad
\dot{y} = x+ay, \qquad
\dot{z} = b+z(x-c),
\label{eq:rossler}
\end{equation}
at $(a,b,c)=(0.2,\,0.2,\,5.7)$. Both systems are dimensionless: their model time
carries no physical unit, and all times below are quoted in model time units.

Table~\ref{tab:settings} collects the benchmark settings used throughout, shared
across systems and noise levels unless noted.

\begin{table}[H]
\centering\small
\caption{Benchmark settings. Systems are integrated at their classical
  parameters; all reconstruction, weak-form, integration, and scoring settings
  are shared across systems and noise levels unless noted. All times are in
  dimensionless model time units.}
\label{tab:settings}
\scalebox{0.71}{%
\begin{tabular}{ll}
\toprule
Setting & Value \\
\midrule
\multicolumn{2}{l}{\textit{Systems and observable}}\\
\quad Lorenz $(\sigma,\rho,\beta)$         & $(10,\,28,\,8/3)$ \\
\quad R\"ossler $(a,b,c)$                   & $(0.2,\,0.2,\,5.7)$ \\
\quad Scalar observable                     & $s=x$ \\
\midrule
\multicolumn{2}{l}{\textit{Data and sampling}}\\
\quad Sampling step $\Delta t$              & $0.01$ \\
\quad Initial state $\mathbf{U}(0)$         & $(1,0,0)$ (both systems) \\
\quad Transient                   & $50$ \\
\quad Noise $\sigma_{\mathrm{frac}}$        & $\{0,\,0.15,\,0.30\}$, \ $\sigma_\eta=\sigma_{\mathrm{frac}}\,\mathrm{std}(s)$ \\
\quad Lorenz $(t_{\mathrm{total}},H)$       & $(60,20)$ clean / $(60,10)$ noisy \\
\quad R\"ossler $(t_{\mathrm{total}},H)$    & $(200,100)$ clean / $(60,10)$ noisy \\
\quad Realizations per system per noise level                 & $5{,}000$ \\
\quad Initial-condition windows             & $8$ (interior) \\
\midrule
\multicolumn{2}{l}{\textit{Differential embedding}}\\
\quad Coordinates                           & $3$: $(u_0,u_1,u_2)$ \\
\quad FNN delay $\tau$ (Lorenz / R\"ossler) & $16$ / $84$ samples \\
\midrule
\multicolumn{2}{l}{\textit{Derivative estimation}}\\
\quad Spline                                & quintic Reinsch \\
\quad Penalty weight $\lambda$              & $N_{\mathrm{s}}\hat\sigma^2$ \\
\quad Noise scale $\hat\sigma$              & $1.4826\,\mathrm{MAD}(\Delta^2 y)/\sqrt{6}$ \\
\midrule
\multicolumn{2}{l}{\textit{Weak-form identification}}\\
\quad Test function $\psi(\xi)$             & $(1-\xi^2)^4$, \ $\xi\in[-1,1]$ \\
\quad Windows $K$ / half-width $R$          & $200$ / $20$ samples \\
\quad Denominator library $\Phi_D$          & affine $\{1,u_0\}$ \ ($d_D=1$) \\
\quad Numerator library $\Phi_N$            & deg $\le 4$ in $(u_0,u_1,u_2)$, $35$ monomials \\
\quad Denominator recovery                  & global SVD (clean); $12$-window average, $120$ samples/window (noisy) \\
\quad Numerator solve                       & ordinary least squares \\
\midrule
\multicolumn{2}{l}{\textit{Forecast integration}}\\
\quad Integrator                            & DOP853 \\
\quad $(\mathrm{rtol},\mathrm{atol})$       & $(10^{-8},10^{-10})$ \\
\quad \texttt{max\_step}                    & $0.05$ \\
\midrule
\multicolumn{2}{l}{\textit{VPT scoring}}\\
\quad Error threshold $\varepsilon_{\mathrm{thr}}$ & $0.2$ \\
\quad Sustained run                         & $20$ consecutive samples \\
\quad Normalization                         & $\mathrm{std}(u_0^{\mathrm{true}}\vert_{[0,H]})$ \\
\bottomrule
\end{tabular}%
}
\end{table}

\section{Local Reconstruction in Three Dimensions}
\label{app:jet_proof}

\begin{theorem}[3D differential-embedding reconstruction]
\label{thm:jet_invertibility}
Let $\mathbf{U}(t)\in\mathbb{R}^3$ satisfy
\begin{equation}
\dot{\mathbf{U}} = \mathbf{F}(\mathbf{U}),
\label{eq:app_thm_system}
\end{equation}
where $\mathbf{F}$ is smooth, and let $\pi:\mathbb{R}^3\to\mathbb{R}$ be a
smooth scalar observable. Define the observed signal
$s(t)=\pi(\mathbf{U}(t))$ and the differential-embedding map
\begin{equation}
\pi_J(\mathbf{U}) = \bigl(s,\dot{s},\ddot{s}\bigr).
\label{eq:app_jet_map_general}
\end{equation}
If
\begin{equation}
\det J_{\pi_J(\mathbf{U})} \neq 0,
\label{eq:app_jet_nondeg}
\end{equation}
then $\pi_J$ is locally invertible at $\mathbf{U}$. Equivalently, the
differential embedding coordinates
\begin{equation}
\mathbf{u}=(u_0,u_1,u_2)=(s,\dot{s},\ddot{s})
\label{eq:app_coords}
\end{equation}
locally determine the state, and there exists a smooth local closure function
$\Phi$ such that
\begin{equation}
\dot{u}_0 = u_1, \qquad
\dot{u}_1 = u_2, \qquad
\dot{u}_2 = \Phi(u_0,u_1,u_2).
\label{eq:app_jet_closure_general}
\end{equation}
\end{theorem}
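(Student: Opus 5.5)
The plan is to make the differential-embedding map explicit through Lie derivatives, apply the inverse function theorem, and then push the third derivative of $s$ forward through the local inverse.

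\emph{Step 1: $\pi_J$ as a map on state space.} Write the time derivatives along trajectories as Lie derivatives: $s=\pi(\mathbf U)$, $\dot s = L_{\mathbf F}\pi(\mathbf U)=\nabla\pi\cdot\mathbf F$, and $\ddot s = L_{\mathbf F}^2\pi(\mathbf U)$. So $\pi_J(\mathbf U)=(\pi, L_{\mathbf F}\pi, L_{\mathbf F}^2\pi)(\mathbf U)$ is a function of the state alone, independent of $t$. It is smooth because $\pi$ and $\mathbf F$ are smooth. Its Jacobian $J_{\pi_J(\mathbf U)}$ has rows $\nabla\pi$, $\nabla(L_{\mathbf F}\pi)$, $\nabla(L_{\mathbf F}^2\pi)$; this is the observability matrix of Letellier--Aguirre--Maquet.

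\emph{Step 2: local invertibility.} Under hypothesis \eqref{eq:app_jet_nondeg}, the inverse function theorem gives neighbourhoods $V\ni\mathbf U$ and $W\ni\pi_J(\mathbf U)$ such that $\pi_J|_V:V\to W$ is a diffeomorphism. Call its smooth inverse $\pi_J^{-1}:W\to V$. This is exactly the claim that $\mathbf u=(u_0,u_1,u_2)$ locally determines the state.

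\emph{Step 3: the closure.} Along any trajectory segment that stays in $V$, the first two equations $\dot u_0=u_1$ and $\dot u_1=u_2$ hold by the definition of the coordinates. For the last one, note that $\dddot s = L_{\mathbf F}^3\pi(\mathbf U(t))$, which is a smooth function on state space. Define $\Phi := L_{\mathbf F}^3\pi\circ\pi_J^{-1}$ on $W$. Then $\dot u_2 = L_{\mathbf F}^3\pi(\mathbf U) = \Phi(\pi_J(\mathbf U)) = \Phi(u_0,u_1,u_2)$. As a composition of smooth maps, $\Phi$ is smooth.

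Equivalently, using \eqref{eq:jet_chain}, the pushed-forward vector field on $W$ is $J_{\pi_J}\mathbf F\circ\pi_J^{-1}$. Its first two components reduce to $u_1$ and $u_2$, and its third component is $\Phi$. This shows the local system is smoothly conjugate to \eqref{eq:app_thm_system}.

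\emph{Main obstacle.} The main difficulty is interpretive rather than technical: checking that $\dot s$ and $\ddot s$ are genuinely state functions. This needs autonomy of $\mathbf F$ and time-independence of $\pi$. It also needs the restriction to a neighbourhood, since the trajectory may leave $V$; there the closure holds only piecewise, on charts where the determinant is nonzero. The result is local and makes no global claim. On the singular set $\det J_{\pi_J}=0$, $\Phi$ may fail to exist, which matches the poles discussed in Section~\ref{sec:theory}.
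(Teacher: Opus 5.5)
Your proposal is correct and takes essentially the same route as the paper: the inverse function theorem gives a local inverse $\pi_J^{-1}$, and the closure $\Phi$ comes from composing the smooth third-derivative law $\dot u_2 = G(\mathbf U)$ with that inverse. Your version is slightly more explicit than the paper's, since you identify $G$ as the Lie derivative $L_{\mathbf F}^3\pi$ and note that $\dot u_2=\Phi(\mathbf u)$ holds only while the trajectory stays in the neighbourhood $V$ where the inverse is defined.
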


\begin{proof}
In the three-dimensional square setting, the map $\pi_J:\mathbb{R}^3\to
\mathbb{R}^3$ has the Jacobian
\begin{equation}
J_{\pi_J(\mathbf{U})} =
\begin{pmatrix}
\nabla s(\mathbf{U}) \\
\nabla \dot{s}(\mathbf{U}) \\
\nabla \ddot{s}(\mathbf{U})
\end{pmatrix},
\label{eq:app_jet_jacobian}
\end{equation}
whose rows are the gradients of the three differential embedding coordinates.
Condition~\eqref{eq:app_jet_nondeg} implies that $J_{\pi_J(\mathbf{U})}$ is
invertible. The inverse function theorem therefore gives a local inverse
$\pi_J^{-1}$ in a neighbourhood of $\mathbf{u}=\pi_J(\mathbf{U})$. Hence the
state can be reconstructed locally from $(u_0,u_1,u_2)$.

The time derivatives of the first two coordinates satisfy
\begin{equation}
\dot{u}_0 = u_1, \qquad \dot{u}_1 = u_2
\label{eq:app_proof_lower}
\end{equation}
by definition. For the third coordinate, the vector field gives a smooth
evolution law
\begin{equation}
\dot{u}_2 = G(\mathbf{U})
\label{eq:app_proof_G}
\end{equation}
for some smooth function $G$. Composing with the local inverse
$\mathbf{U}=\pi_J^{-1}(\mathbf{u})$ yields
\begin{equation}
\dot{u}_2 = G\bigl(\pi_J^{-1}(\mathbf{u})\bigr) = \Phi(\mathbf{u}),
\label{eq:app_proof_phi}
\end{equation}
which is the claimed local closure.
\end{proof}

\section{Rational Closure Construction \& Polynomial Approximation via Stone--Weierstrass}
\label{app:sw}

This appendix explains why the closure $\Phi$ in differential embedding
coordinates~\eqref{eq:jet_ode} admits a rational representation and why a
polynomial rational ansatz is justified on the compact embedded attractor.
We work in the general setting of Sec.~\ref{sec:theory}. The rational representation below holds whenever $\pi_J$ is locally invertible. Note that in the following derivation, we assume that the Jacobian is a square matrix.

\subsection{Rational closure by construction}\label{app:rationalclosure}

Differentiating $\mathbf{u}=\pi_J(\mathbf{U})$ along trajectories and using
$\dot{\mathbf{U}}=\mathbf{F}(\mathbf{U})$ from~\eqref{eq:system} gives the linear
relation
\begin{equation}
\dot{\mathbf{u}} = J_{\pi_J(\mathbf{U})}\,\mathbf{F}(\mathbf{U}),
\label{eq:app_sw_chain}
\end{equation}
where $J_{\pi_J(\mathbf{U})}$ denotes the Jacobian matrix of the
differential-embedding map $\pi_J$ at $\mathbf{U}$.

Applying Cramer's rule~\cite{horn2012matrix} for inverting the matrix $J_{\pi_J(\mathbf{U})}$,

\begin{equation}
\operatorname{adj}\!\bigl(J_{\pi_J(\mathbf{U})}\bigr)\,\dot{\mathbf{u}}
=
\det\!\bigl(J_{\pi_J(\mathbf{U})}\bigr)\,\mathbf{F}(\mathbf{U}).
\label{eq:app_sw_adj}
\end{equation}

This is a linear relation among the components of
$\dot{\mathbf{u}}=(\dot{u}_0,\ldots,\dot{u}_k)$. The \emph{degeneracy set}
$\mathcal{D}=\{\mathbf{U}:\det J_{\pi_J(\mathbf{U})}=0\}$
marks where $\pi_J$ loses local invertibility.
Away from the degeneracy set $\mathcal{D}$, Equation~\eqref{eq:app_sw_adj} may be regarded as a linear system for the unknown vector $\dot{\mathbf u}$. Applying Cramer’s rule a second time to this system yields an explicit expression for the highest jet rate.

To make this explicit, let
\[
A(\mathbf{U})
=
\operatorname{adj}\!\bigl(J_{\pi_J(\mathbf{U})}\bigr),
\qquad
\mathbf{b}(\mathbf{U})
=
\det\!\bigl(J_{\pi_J(\mathbf{U})}\bigr)\mathbf{F}(\mathbf{U}),
\]
so that~\eqref{eq:app_sw_adj} may be written as
\[
A(\mathbf{U})\dot{\mathbf{u}}=\mathbf{b}(\mathbf{U}).
\]
Let \(A_k(\mathbf{U})\) denote the matrix obtained from \(A(\mathbf{U})\)
by replacing its \((k+1)\)-st column by \(\mathbf{b}(\mathbf{U})\).
Cramer's rule then gives
\begin{equation}
\dot{u}_k
=
\frac{\det A_k(\mathbf{U})}
{\det A(\mathbf{U})}.
\label{eq:app_sw_cramer}
\end{equation}
Since
\[
\det\!\bigl(\operatorname{adj}(M)\bigr)
=
\det(M)^{n-1}
\]
for an \(n\times n\) matrix \(M\), the denominator is
\[
\det A(\mathbf{U})
=
\det\!\bigl(J_{\pi_J(\mathbf{U})}\bigr)^{n-1}.
\]
Moreover, the factor
\(\det(J_{\pi_J(\mathbf{U})})\) may be extracted from the replaced
column of \(A_k(\mathbf{U})\). Hence
\begin{equation}
\dot{u}_k
=
\frac{
P(\mathbf{U})
}{
\det\!\bigl(J_{\pi_J(\mathbf{U})}\bigr)^{n-2}
},
\label{eq:app_sw_ratio_general}
\end{equation}
where
\begin{equation}
P(\mathbf{U})
=
\det\!\left[
A_0(\mathbf{U}),\ldots,
A_{k-1}(\mathbf{U}),
\mathbf{F}(\mathbf{U})
\right],
\label{eq:app_sw_numerator}
\end{equation}
and \(A_j(\mathbf{U})\) denotes the \(j\)-th column of
\(A(\mathbf{U})\).

In the three-dimensional case \(n=3\), so that
\(\mathbf{u}=(u_0,u_1,u_2)\), equation~\eqref{eq:app_sw_ratio_general}
reduces to
\begin{equation}
\dot{u}_2=u_3
=
\frac{P(\mathbf{U})}
{\det\!\bigl(J_{\pi_J(\mathbf{U})}\bigr)}.
\label{eq:app_sw_local_ratio_U}
\end{equation}
Away from \(\mathcal{D}\), the differential-embedding map is locally
invertible. Substituting
\(\mathbf{U}=\pi_J^{-1}(\mathbf{u})\) therefore gives
\begin{equation}
u_3
=
\frac{
P\!\left(\pi_J^{-1}(\mathbf{u})\right)
}{
\det\!\left(
J_{\pi_J(\pi_J^{-1}(\mathbf{u}))}
\right)
}.
\label{eq:app_sw_local_ratio}
\end{equation}

\subsection{Polynomial rational ansatz}

Since $\mathcal{A}$ is compact and $\pi_J$ is continuous, the embedded
attractor $\mathcal{A}_J=\pi_J(\mathcal{A})$ is compact~\cite{rudin1976principles}.

Away from the degeneracy set $\mathcal{D}$, \ref{app:rationalclosure} shows that the
highest jet rate may be written locally as
\begin{equation}
\dot{u}_k
=
\frac{N(\mathbf{u})}{D(\mathbf{u})},
\label{eq:closure_local}
\end{equation}
where
\[
D(\mathbf{u})
=
\det\!\left(
J_{\pi_J}\!\left(\pi_J^{-1}(\mathbf{u})\right)
\right)
\]
and $N(\mathbf{u})$ is obtained from the corresponding Cramer numerator.
Multiplying by $D(\mathbf{u})$ removes the denominator and yields the rational
closure ansatz
\begin{equation}
D(\mathbf{u})\,\dot{u}_k
=
N(\mathbf{u}).
\label{eq:closure}
\end{equation}

Although the quotient~\eqref{eq:closure_local} becomes singular on
$\mathcal{D}$, the product~\eqref{eq:closure} is regular, since both
$D(\mathbf{u})$ and $N(\mathbf{u})$ are locally smooth functions of the
jet coordinates away from $\mathcal{D}$. Two regimes justify representing these functions by
polynomials.

\paragraph{Approximation regime (general smooth systems).}

For a general smooth vector field and smooth observable, both
$D(\mathbf{u})$ and $N(\mathbf{u})$ are continuous on the compact set
$\mathcal{A}_J$. By the Stone--Weierstrass theorem~\cite{rudin1976principles},
for every $\varepsilon>0$ there exist polynomials
$p_D(\mathbf{u})$ and $p_N(\mathbf{u})$ such that
\begin{equation}
\sup_{\mathbf{u}\in\mathcal{A}_J}
|D(\mathbf{u})-p_D(\mathbf{u})|
<
\varepsilon,
\qquad
\sup_{\mathbf{u}\in\mathcal{A}_J}
|N(\mathbf{u})-p_N(\mathbf{u})|
<
\varepsilon.
\label{eq:app_sw_stw}
\end{equation}
Consequently,
\begin{equation}
D(\mathbf{u})\approx p_D(\mathbf{u}),
\qquad
N(\mathbf{u})\approx p_N(\mathbf{u}),
\label{eq:app_sw_poly}
\end{equation}
yielding a polynomial approximation to the rational closure
\eqref{eq:closure}. The approximation becomes exact in the limit of increasing
polynomial degree.

\paragraph{Exact regime (polynomial dynamics and observable).}

If $\mathbf{F}$ and the observable $\pi$ are polynomial, then the jet
components, the Jacobian
$J_{\pi_J(\mathbf{U})}$, its determinant, and the associated Cramer numerator
are all polynomial in $\mathbf{U}$. Consequently, after composition with the
local inverse $\pi_J^{-1}$, both $D(\mathbf{u})$ and $N(\mathbf{u})$ are
polynomial, and the rational closure~\eqref{eq:closure} is exact at finite
degree.

\medskip
The point of this appendix is not to claim that the exact closure $\Phi$ is
globally polynomial.
Rather, it shows that $\Phi$ is rational wherever $\pi_J$ is locally
invertible, that the pole structure is encoded in
$\det J_{\pi_J(\mathbf{U})}$, and that Stone--Weierstrass justifies
polynomial approximation of both $D$ and the regularized numerator on the
compact attractor $\mathcal{A}_J$.

\section{Noise-scale estimate from second differences}
\label{app:noise_scale}

We derive the noise-scale estimate
$\hat\sigma = 1.4826\,\mathrm{MAD}(\Delta^2 y)/\sqrt{6}$ of
Eq.~\eqref{eq:noise_scale} from the i.i.d.\ $\mathcal{N}(0,\sigma_\eta^2)$ noise model
of Section~\ref{sec:method}, so that $\hat\sigma$ recovers $\sigma_\eta$ from the data
alone, with no prior knowledge of the noise level.

The second-difference operator $\Delta^2 y_i = y_{i+1} - 2y_i + y_{i-1}$ annihilates
any locally affine part of $y_i$. Writing $y_i = s(t_i)+\eta_i$ and Taylor-expanding
the smooth signal gives
$\Delta^2 s(t_i) = \ddot{s}(t_i)\,\Delta t^2 + \mathcal{O}(\Delta t^4)$, which is
negligible at small $\Delta t$, so the differenced sequence is dominated by the noise
combination $\Delta^2\eta_i = \eta_{i+1} - 2\eta_i + \eta_{i-1}$. As a weighted sum of
independent Gaussians with weights $(1,-2,1)$,
\begin{equation}
\begin{aligned}
  \Delta^2\eta_i &\sim \mathcal{N}\!\bigl(0,\,6\sigma_\eta^2\bigr),\\
  \mathrm{Var}(\Delta^2\eta_i) &= \bigl(1^2+(-2)^2+1^2\bigr)\sigma_\eta^2 = 6\,\sigma_\eta^2,
\end{aligned}
\end{equation}
so its standard deviation is $\sqrt{6}\,\sigma_\eta$ and dividing by $\sqrt{6}$ returns
the scale of $\sigma_\eta$ itself.

It remains to estimate that standard deviation robustly, which we do with the median
absolute deviation $\mathrm{MAD}(z)=\mathrm{median}_i\,|z_i-\mathrm{median}_j\,z_j|$
\cite{donoho1994ideal}, insensitive to the heavy tails left by outliers or by any
signal surviving the differencing. For a zero-mean Gaussian $X\sim\mathcal{N}(0,\varsigma^2)$
the inner median vanishes, so $\mathrm{MAD}(X)=\mathrm{median}\,|X|$ is the value $m$
with $\Pr(|X|\le m)=\tfrac12$. Since $\Pr(|X|\le m)=2\varphi(m/\varsigma)-1$ with $\varphi$
the standard-normal CDF,
\begin{equation}
\begin{aligned}
  2\varphi\!\left(\frac{m}{\varsigma}\right)-1&=\frac12
  \;\Longrightarrow\;
  m=\varsigma\,\varphi^{-1}\!\left(\tfrac34\right),\\
  \varsigma&=\frac{\mathrm{MAD}(X)}{\varphi^{-1}(3/4)}\approx 1.4826\,\mathrm{MAD}(X),
\end{aligned}
\end{equation}
using $\varphi^{-1}(3/4)\approx 0.6745$. Applying this to $\Delta^2 y$ estimates its
standard deviation $\sqrt{6}\,\sigma_\eta$ as $1.4826\,\mathrm{MAD}(\Delta^2 y)$, and
the final division by $\sqrt{6}$ gives $\hat\sigma\approx\sigma_\eta$, which is
Eq.~\eqref{eq:noise_scale}. On clean data the dispersion of $\Delta^2 y$ collapses,
$\hat\sigma\to 0$, and the Reinsch penalty $\lambda=N_{\mathrm{s}}\hat\sigma^2$
vanishes so the spline interpolates the record.

\section{Null-space solution of the homogeneous closure system}
\label{app:nullspace}

The closure coefficients satisfy the homogeneous system $M\,\theta = 0$
[Eq.~\eqref{eq:strong_identity} stacked over samples], with
$\theta=[\theta_D^\top,\theta_N^\top]^\top$. A common rescaling of $\theta$ leaves
the closure $\Phi=N/D$ unchanged, so $\theta$ is fixed only up to scale; we impose
$\lVert\theta\rVert_2=1$ and seek the unit vector that $M$ maps closest to zero,
\begin{equation}
\theta^\star = \arg\min_{\lVert\theta\rVert_2=1}\lVert M\theta\rVert_2^2
            = \arg\min_{\lVert\theta\rVert_2=1}\theta^\top M^\top M\,\theta.
\label{eq:nullspace_obj}
\end{equation}
Let $M=U\Sigma V^\top$ be the singular value decomposition, with $U,V$ orthogonal
and $\Sigma=\mathrm{diag}(\sigma_1\ge\cdots\ge\sigma_n\ge0)$. Since $U$ is
orthogonal it preserves the norm, so with $c=V^\top\theta$ (also unit-norm),
\begin{equation}
\lVert M\theta\rVert_2^2 = \lVert \Sigma\,c\rVert_2^2 = \sum_{i=1}^{n}\sigma_i^2\,c_i^2 .
\end{equation}
Minimizing $\sum_i\sigma_i^2 c_i^2$ subject to $\sum_i c_i^2=1$ places all weight on
the smallest singular value, $c=e_n$, giving the minimum value $\sigma_{\min}^2$ at
\begin{equation}
\theta^\star = V e_n = v_n ,
\end{equation}
the right singular vector associated with $\sigma_{\min}$. If $M$ is exactly
rank-deficient ($\sigma_{\min}=0$) then $Mv_n=0$ and $v_n$ spans the null space;
under noise $\sigma_{\min}>0$ and $v_n$ is the best approximate null vector. By
singular-vector perturbation theory~\cite{horn2012matrix}, a perturbation $\delta M$
rotates $v_n$ by an amount controlled by the inverse gap
$1/(\sigma_{\mathrm{next}}-\sigma_{\min})$ to the next singular value; a small gap
makes the recovered direction ill-conditioned.

\section{Observability Screening for Scalar Observables}
\label{app:observability_screening}

We screen scalar observables for both benchmark systems by the Letellier
observability coefficient
$\delta=|\lambda_{\min}(J_\Phi^\top J_\Phi)|/|\lambda_{\max}(J_\Phi^\top J_\Phi)|$,
ranking by the median $\delta$ along a trajectory of $500$ model time units ($50{,}000$ samples,
DOP853, $\mathrm{rtol}=10^{-10}$, $\mathrm{atol}=10^{-12}$). The Lorenz screening
is given first, then the R\"ossler screening.


\subsection{Lorenz Scalar Observables}

\begin{table}[H]
  \centering
  \small
  \caption{Letellier observability coefficient $\delta$ across $12$ Lorenz scalar
  observables, ranked by median $\delta$ (descending). No observable improves on
  the $s=x$ baseline ($\delta_{\mathrm{median}}\approx7.2\times10^{-6}$) by the
  screening criterion of a $\geq10\times$ median gain with uniform attractor
  coverage (p$_{95}$/p$_1<10$); the best candidate, $s=z$, gives only about
  $5\times$ and remains highly nonuniform.}
  \label{tab:observability_alternatives}
  \resizebox{\linewidth}{!}{%
  \begin{tabular}{r|l|r|r|r|r|r}
    \hline
    \textbf{Rank} & \textbf{Observable} & \textbf{Median $\delta$} & \textbf{p$_1$ $\delta$} & \textbf{p$_{95}$ $\delta$} & \textbf{Frac $\delta <10^{-3}$} & \textbf{Uniform} \\
    \hline
    1  & $s=z$                    & $3.63 \times 10^{-5}$ & $8.03 \times 10^{-7}$ & $5.29 \times 10^{-4}$ & 0.973 & False \\
    2  & $s=x+y+z$                & $2.50 \times 10^{-5}$ & $1.33 \times 10^{-8}$ & $1.61 \times 10^{-4}$ & 1.000 & False \\
    3  & $s=y+z$                  & $1.87 \times 10^{-5}$ & $3.43 \times 10^{-8}$ & $2.32 \times 10^{-4}$ & 0.996 & False \\
    4  & $s=x+z$                  & $8.56 \times 10^{-6}$ & $2.39 \times 10^{-9}$ & $4.85 \times 10^{-5}$ & 1.000 & False \\
    5  & $s=x^2+y^2+(z-27)^2$     & $8.23 \times 10^{-6}$ & $7.99 \times 10^{-10}$ & $8.43 \times 10^{-4}$ & 0.962 & False \\
    6  & $s=x$ (baseline)          & $7.17 \times 10^{-6}$ & $1.62 \times 10^{-9}$ & $1.93 \times 10^{-5}$ & 1.000 & False \\
    7  & $s=x+y$                  & $5.92 \times 10^{-6}$ & $2.83 \times 10^{-9}$ & $9.99 \times 10^{-5}$ & 1.000 & False \\
    8  & $s=y$                    & $4.89 \times 10^{-6}$ & $2.05 \times 10^{-9}$ & $1.51 \times 10^{-4}$ & 1.000 & False \\
    9  & $s=x^2-z$                & $2.02 \times 10^{-6}$ & $2.98 \times 10^{-9}$ & $4.29 \times 10^{-5}$ & 1.000 & False \\
    10 & $s=xy$                   & $1.17 \times 10^{-6}$ & $1.13 \times 10^{-10}$ & $3.84 \times 10^{-5}$ & 1.000 & False \\
    11 & $s=x^2+y^2$              & $4.56 \times 10^{-7}$ & $9.11 \times 10^{-11}$ & $6.98 \times 10^{-6}$ & 1.000 & False \\
    12 & $s=xyz$                  & $3.37 \times 10^{-7}$ & $2.59 \times 10^{-11}$ & $1.51 \times 10^{-5}$ & 1.000 & False \\
    \hline
  \end{tabular}%
  }
\end{table}

No candidate in this set meets both screening criteria: a tenfold median-$\delta$
improvement over the $s=x$ baseline and uniform attractor coverage. The best
median belongs to $s=z$ (about $5\times$ the baseline), but this masks severe
nonuniformity: the median $\delta$ over the region $|x|<0.5$ is roughly
$657\times$ smaller than the attractor-wide median for $s=x$, so the worst-region
pathology persists even for the best-ranked observable. The lower-ranked
observables (rows~9--12) reach smaller medians only through extreme localization,
with near-zero observability over most of the attractor, and are unsuitable for
general closure recovery. Within the polynomial observables tested, the Lorenz
observability limitation is therefore structural: changing the observable alone
does not produce uniform observability, so the noisy-regime ceiling reflects
observability geometry and initial-condition sensitivity rather than the
derivative estimator alone.

\subsection{R\"ossler Scalar Observables}
\label{app:observability_screening_rossler}

\begin{table}[H]
  \centering
  \small
  \caption{Letellier observability coefficient $\delta$ across $11$ R\"ossler
  scalar observables, ranked by median $\delta$ (descending). In contrast to
  Lorenz (Table~\ref{tab:observability_alternatives}), R\"ossler admits a
  \emph{uniformly} observable polynomial observable, $s=y$ ($\delta\equiv0.141$,
  $|\det J_\Phi|\equiv1$); the benchmark observable $s=x$ is moderately observable
  and $s=z$ is poorly observable.}
  \label{tab:observability_rossler}
  \resizebox{\linewidth}{!}{%
  \begin{tabular}{r|l|r|r|r|r|r}
    \hline
    \textbf{Rank} & \textbf{Observable} & \textbf{Median $\delta$} & \textbf{p$_1$ $\delta$} & \textbf{p$_{95}$ $\delta$} & \textbf{Frac $\delta <10^{-3}$} & \textbf{Uniform} \\
    \hline
    1  & $s=y$ (uniform)          & $1.41 \times 10^{-1}$ & $1.41 \times 10^{-1}$ & $1.41 \times 10^{-1}$ & 0.000 & True \\
    2  & $s=x+y$                  & $2.82 \times 10^{-2}$ & $2.16 \times 10^{-4}$ & $1.08 \times 10^{-1}$ & 0.029 & False \\
    3  & $s=x$ (baseline)         & $1.28 \times 10^{-2}$ & $8.63 \times 10^{-6}$ & $8.04 \times 10^{-2}$ & 0.056 & False \\
    4  & $s=xy$                   & $1.08 \times 10^{-2}$ & $6.65 \times 10^{-6}$ & $1.69 \times 10^{-2}$ & 0.106 & False \\
    5  & $s=x^2-z$                & $1.13 \times 10^{-3}$ & $8.19 \times 10^{-7}$ & $7.23 \times 10^{-3}$ & 0.476 & False \\
    6  & $s=x^2+y^2$              & $1.00 \times 10^{-3}$ & $3.17 \times 10^{-7}$ & $6.29 \times 10^{-2}$ & 0.499 & False \\
    7  & $s=x+y+z$                & $3.41 \times 10^{-4}$ & $1.07 \times 10^{-5}$ & $4.53 \times 10^{-2}$ & 0.623 & False \\
    8  & $s=y+z$                  & $2.57 \times 10^{-4}$ & $1.67 \times 10^{-5}$ & $3.88 \times 10^{-2}$ & 0.659 & False \\
    9  & $s=x+z$                  & $1.83 \times 10^{-4}$ & $2.58 \times 10^{-6}$ & $2.96 \times 10^{-2}$ & 0.689 & False \\
    10 & $s=xyz$                  & $2.23 \times 10^{-7}$ & $3.40 \times 10^{-10}$ & $5.55 \times 10^{-4}$ & 0.973 & False \\
    11 & $s=z$                    & $6.74 \times 10^{-9}$ & $1.61 \times 10^{-11}$ & $3.86 \times 10^{-4}$ & 0.975 & False \\
    \hline
  \end{tabular}%
  }
\end{table}

\begin{figure}[H]
\centering
\scalebox{\figscale}{%
\includegraphics[width=\textwidth]{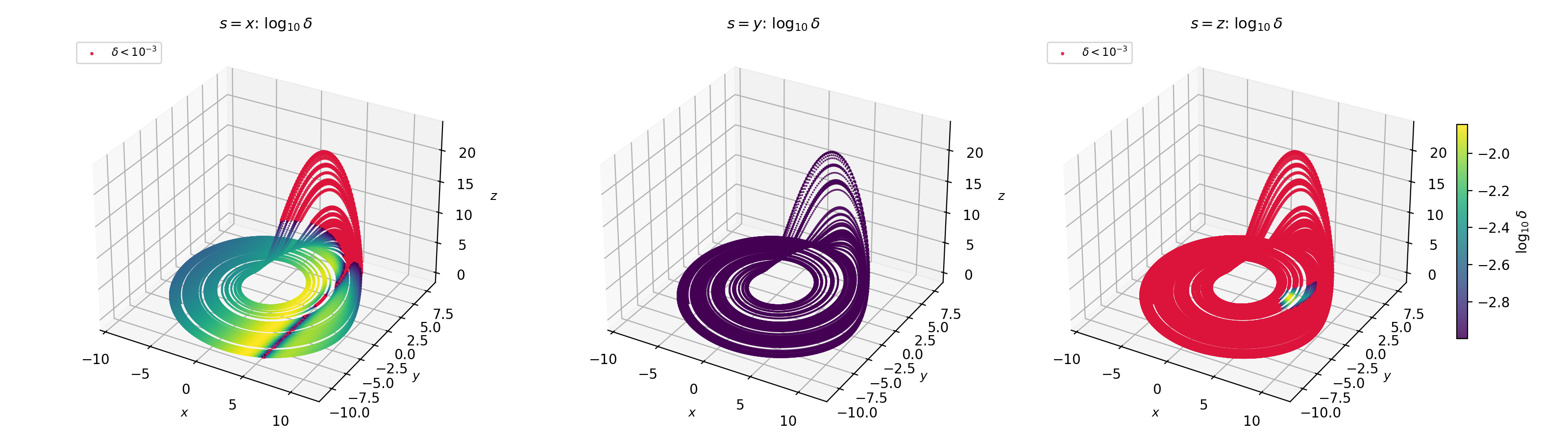}%
}
\caption{R\"ossler observability coefficient $\log_{10}\delta$ for the scalar
  observables $s\in\{x,y,z\}$; points with $\delta<10^{-3}$ are highlighted in
  crimson. The coefficient is constant for $s=y$, so no weak region appears; for
  the benchmark $s=x$ the weak region localizes near $x=5.9$; for $s=z$
  observability is weak across almost the entire attractor.}
\label{fig:rossler_obs}
\end{figure}

In observability terms R\"ossler is essentially the inverse of Lorenz: no Lorenz
polynomial observable is uniform, whereas R\"ossler admits one, $s=y$, with a
constant observability Jacobian across the attractor. The benchmark $s=x$ is only
moderately observable, with a weak region near $x=5.9$, and $s=z$ is weak almost
everywhere. This stronger, more uniform observability is consistent with the
longer noisy-regime forecasts reported for R\"ossler in the main text.

\section{DOP853 Integrator Tolerance and the
  \texorpdfstring{\\*}{}Lyapunov-Amplification Ceiling}
\label{app:tol}

This appendix is only intended as a heuristic scale check for the clean Lorenz median; it is
not an independent proof of the benchmark result. For an ODE integrated with
DOP853 at relative tolerance $\mathrm{rtol}$, the accumulated integration error
after $N_{\mathrm{step}}$ steps grows as
$\varepsilon_0 \sim N_{\mathrm{step}}\cdot\mathrm{rtol}$ in the worst
case~\cite{hairer1993solving}. 
On the Lorenz attractor with $\Delta t=10^{-2}$ and the clean-data regime horizon
$H=20$ in model time, $N_{\mathrm{step}} \approx 2\,000$, giving $\varepsilon_0 \approx 2\times10^{-5}$
at the deployed $\mathrm{rtol}=10^{-8}$.
With the absolute VPT threshold
$\varepsilon_{\mathrm{thr}}\,\mathrm{std}(x) \approx 1.75$ (with
$\varepsilon_{\mathrm{thr}}=0.2$), an initial error $\varepsilon_0$ amplified as
$e^{\lambda_1 t}$ reaches the threshold after a time
$T_{\mathrm{VPT}} \approx \lambda_1^{-1}\ln[\varepsilon_{\mathrm{thr}}\,\mathrm{std}(x)/\varepsilon_0]$.
Measured in Lyapunov times the growth rate drops out, and the
Lyapunov-amplification ceiling is simply
\begin{equation}
  \Lambda_{\mathrm{VPT}} = \lambda_1 T_{\mathrm{VPT}} \approx
    \ln\!\frac{\varepsilon_{\mathrm{thr}}\,\mathrm{std}(x)}{\varepsilon_0}.
\label{eq:vpt_ceiling}
\end{equation}
This gives a rough worst-case estimate of order $\Lambda \approx 11.4$.
The clean Lorenz median VPT ($10.5$ Lyapunov times, Sec.~\ref{sec:results}) is of the same
order, and the best realizations saturate the $18.1$-Lyapunov-time horizon; realizations whose trajectory
avoids the $D=0$ pole during the scored horizon exceed this crude estimate
because the adaptive integrator usually accumulates less error than the
worst-case bound.

R\"{o}ssler has a Lyapunov ceiling well
beyond its $7.1$-Lyapunov-time horizon, so on clean data the global-SVD pin succeeds and
the forecast is integrator-limited rather than amplification-limited, saturating
the horizon; the heavy tail in Table~\ref{tab:vpt} appears only in the noisy
regime, reflecting the difficulty of pinning the off-center pole under noise,
not this ceiling.

\bibliographystyle{elsarticle-num}
\bibliography{references}

\end{document}